\theoremstyle{plain}
\newtheorem{theorem}{Theorem}
\newtheorem{lemma}[theorem]{Lemma}
\newtheorem{corollary}[theorem]{Corollary}
\theoremstyle{definition}
\newtheorem{definition}{Definition}
\numberwithin{theorem}{section}
\numberwithin{definition}{section}
\newtheorem{problem}{Open Problem}
\DeclareMathOperator{\rank}{rk}
\DeclareMathOperator{\Prob}{Pr}
\DeclareMathOperator{\row}{Row}
\DeclareMathOperator{\GL}{GL}
\newcommand{\bF}{\mathbb{F}}
\newcommand{\bs}{\boldsymbol}
\newcommand{\qbinom}[2]{\genfrac{[}{]}{0pt}{}{\,#1\,}{\,#2\,}_{\!q}} 
\begin{document}

\title{Finite Field Matrix Channels for Network Coding}
\author{Simon R. Blackburn and Jessica Claridge\\
Department of Mathematics \\
Royal Holloway University of London\\
Egham, Surrey TW20 0EX, United Kingdom}
\date{}
\maketitle


\abstract
In 2010, Silva, Kschischang and K\"otter studied certain classes of finite field matrix channels in order to model random linear network coding where exactly $t$ random errors are introduced.

In this paper we consider a generalisation of these matrix channels where the number of errors is not required to be constant, indeed the number of errors may follow any distribution. We show that a capacity-achieving input distribution can always be taken to have a very restricted form (the distribution should be uniform given the rank of the input matrix). This result complements, and is inspired by, a paper of Nobrega, Silva and Uchoa-Filho, that establishes a similar result for a class of matrix channels that model network coding with link erasures. Our result shows that the capacity of our channels can be expressed as a maximisation over probability distributions on the set of possible ranks of input matrices: a set of linear rather than exponential size.
\endabstract


\section{Introduction} \label{intro}

Network coding, first defined in \cite{ahlswede2000}, allows intermediate nodes of a network to compute with and modify data, as opposed to the traditional view of nodes as `on/off' switches. This can increase the rate of information flow through a network. It is shown in \cite{li2003linear} that linear network coding is sufficient to maximise information flow in multicast problems, that is when there is one source node and information is to be transmitted to a set of sink nodes. Moreover, in \cite{ho2006RLNC} it is shown that for general multisource multicast problems, random linear network coding achieves capacity with probability exponentially approaching $1$ with the code length. 

In random linear network coding, the source injects packets into the network; these packets can be thought of as vectors of length $m$ with entries in a finite field~$\bF_q$ (where $q$ is a fixed power of a prime). The packets flow through a network of unknown topology to a sink node. Each intermediate node forwards packets that are random linear combinations of the packets it has received. A sink node attempts to reconstruct the message from these packets. In this context, Silva, Kschischang and K\"otter~\cite{silva2010} studied a channel defined as follows. We write $\mathbb{F}_q^{n \times m}$ to denote the set of all $n \times m$ matrices over $\bF_q$, and write $\GL(n,q)$ for the set of all invertible matrices in $\mathbb{F}_q^{n \times n}$.

\begin{definition}
The \emph{Multiplicative Matrix Channel} (MMC) has input set $\mathcal{X}$ and output set $\mathcal{Y}$, where  $\mathcal{X}=\mathcal{Y}=\mathbb{F}_q^{n \times m}$. The channel law is
\[
\boldsymbol{Y} =\boldsymbol{A}\boldsymbol{X}
\]
where $\boldsymbol{A} \in\GL(n,q)$ is chosen uniformly at random.
\end{definition} 
Here the rows of $\boldsymbol{X}$ correspond to the packets transmitted by the source, the rows of $\boldsymbol{Y}$ are the packets received by the sink, and the matrix $\boldsymbol{A}$ corresponds to the linear combinations of packets computed by the intermediate nodes.

Inspired by Montanari and Urbanke~\cite{montanari2013}, Silva \emph{et al} modelled the introduction of random errors into the network by considering the following generalisation of the MMC. We write $\mathbb{F}_q^{n \times m,r}$ for the set of all $n \times m$ matrices of rank $r$.
\begin{definition}
The \emph{Additive Multiplicative Matrix Channel with $t$ errors} (AMMC) has  input set $\mathcal{X}$ and output set $\mathcal{Y}$, where  $\mathcal{X}=\mathcal{Y}=\mathbb{F}_q^{n \times m}$. The channel law is
\[
\boldsymbol{Y}=\boldsymbol{A}(\boldsymbol{X}+\boldsymbol{B})
\]
where $\boldsymbol{A}\in\GL(n,q)$ and $\boldsymbol{B}\in \mathbb{F}_q^{n \times m,t}$ are chosen uniformly and independently at random.
\end{definition}
So the matrix $\boldsymbol{B}$ corresponds to the assumption that exactly~$t$ linearly independent random errors have been introduced. The MMC is exactly the AMMC with zero errors.

We note that the AMMC is very different from the error model studied in the well-known paper by K\"otter and Kschischang~\cite{Koetter2008}, where the errors are assumed to be adversarial (so the worst case is studied).

In \cite{silva2010} the authors give upper and lower bounds on the capacity of the AMMC, which are shown to converge in certain interesting limiting cases. The exact capacity of the AMMC for fixed parameter choices is hard to determine due to the many degrees of freedom involved: the naive formula maximises over a probability distribution on the set of possible input matrices, and this set is exponentially large.

In this paper we consider a generalisation of these matrix channels that allows the modelling of channels were the number of errors is not necessarily fixed. (For example, it enables the modelling of situations when at most $t$ errors are introduced, or when the errors are not necessarily linearly independent, or both.) To define our generalisation, we need the following notation which is due to Nobrega, Silva and Uchoa-Filho~\cite{Nobrega2013}.
\begin{definition}
Let $\mathcal{R}$ be a probability distribution on the set $\{0,1,\ldots,\min\{m,n\}\}$ of possible ranks of matrices $M\in\mathbb{F}_q^{n \times m}$. We define a distribution on the set $\mathbb{F}_q^{n \times m}$ of matrices by choosing $r$ according to $\mathcal{R}$, and then once $r$ is fixed choosing a matrix $M\in \mathbb{F}_q^{n \times m,r}$ uniformly at random. We say that this distribution is \emph{Uniform Given Rank (UGR) with rank distribution $\mathcal{R}$}. We say a 
distribution on $\mathbb{F}_q^{n \times m}$ is \emph{Uniform Given Rank (UGR)} if it is UGR with rank distribution $\mathcal{R}$ for some distribution $\mathcal{R}$.
\end{definition}
We write $\mathcal{R}(r)$ for the probability of rank $r$ under the distribution $\mathcal{R}$. So a distribution on $\mathbb{F}_q^{n \times m}$ is UGR with rank distribution $\mathcal{R}$ if and only if each $M\in \mathbb{F}_q^{n \times m}$ of rank $r$ is chosen with probability $\mathcal{R}(r)/|\mathbb{F}_q^{n \times m,r}|$.
\begin{definition}
Let $\mathcal{R}$ be a probability distribution on the set $\{0,1,\ldots,\min\{m,n\}\}$ of possible ranks of matrices $M\in\mathbb{F}_q^{n \times m}$. The \emph{Generalised Additive Multiplicative MAtrix Channel with rank error distribution $\mathcal{R}$ (the Gamma channel $\Gamma(\mathcal{R})$)} has  input set $\mathcal{X}$ and output set $\mathcal{Y}$, where  $\mathcal{X}=\mathcal{Y}=\mathbb{F}_q^{n \times m}$. The channel law is
\[
\boldsymbol{Y}=\boldsymbol{A}(\boldsymbol{X}+\boldsymbol{B})
\]
where $\boldsymbol{A}\in\GL(n,q)$ is chosen uniformly, where $\boldsymbol{B}\in \mathbb{F}_q^{n \times m}$ is UGR with rank distribution $\mathcal{R}$, and where  $\boldsymbol{A}$ and $\boldsymbol{B}$ are chosen independently.
\end{definition}
We see that the AMMC is the special case of $\Gamma(\mathcal{R})$ when $\mathcal{R}$ is the distribution choosing rank $t$ with probability $1$. In~\cite[\S VI-D]{silva2010} the authors consider a generalisation of the AMMC which is exactly the Gamma channel in the case when the error rank is bounded by $t$, that is they consider $\Gamma(\mathcal{R})$ when $\mathcal{R}$ is any distribution with $\mathcal{R}(r)=0$ for all $r>t$. For $t=n$ this channel is identical to the Gamma channel. However, the authors consider $t$ as a maximum value for the error rank (thus will be taken to be less than $n$), whereas we consider any full rank distribution, with high rank having low probability. 

Our model covers several very natural situations (some of which are also covered by the generalised AMMC with $t<n$). For example, we may drop the assumption that the $t$ errors are linearly independent by defining $\mathcal{R}(r)$ to be the probability that $t$ vectors span a subspace of dimension $r$, when the vectors are chosen uniformly and independently. We can also extend this to model situations when the number $t$ of vectors varies according to natural distributions such as the binomial distribution (which arises when, for example, a packet is corrupted with some fixed non-zero probability). In practice, given a particular network, one may run tests on the network to see the actual error patterns produced and define an empirical distribution on ranks. One could also define an appropriate distribution by considering some combination of the situations described.

We are interested in the capacity of the Gamma channel. In the generalised AMMC~\cite[\S VI.D]{silva2010} the authors establish a lower bound on the capacity that is at most $\log_q(t+1)$ lower than the capacity of the AMMC with the same value of $t$. Therefore in the limiting cases considered their generalised channel performs at least as well as the AMMC. This is a very useful result when $t$ is significantly smaller than $n$. However, if we take $t=n$ then the expressions~\cite[Eq. 19 \& 20]{silva2010} for the capacity in the limiting cases considered evaluate to zero.

Throughout this paper, we assume that $q$, $n$, $m$ and $\mathcal{R}$ are fixed by the application. We will refer to these values as the \emph{channel parameters}.

We note that the Gamma channel assumes that the \emph{transfer matrix} $\boldsymbol{A}$ is always invertible. This is a realistic assumption in random linear network coding in standard situations: the field size $q$ is normally large, which means linear dependencies between random vectors are less likely. 

In both~\cite{Nobrega2013} and~\cite{siavoshani2011} the authors consider (different) generalisations of the MMC channel that do not necessarily have a square full rank transfer matrix. Such channels allow modelling of network coding when no erroneous packets are injected into the network, but there may be link erasures. 
In~\cite{Nobrega2013}, Nobrega, Silva and Uchoa-Filho define the transfer matrix to be picked from a UGR distribution. One result of~\cite{Nobrega2013} is that a capacity-achieving input distribution for their class of MMC channels can always be taken to be UGR. 

A main result of this paper (Theorem~\ref{capacityUGR}) is that a capacity-achieving input distribution for Gamma channels can always be taken to be UGR. Theorem~\ref{capacityUGR} is a significant extension of the result of~\cite{Nobrega2013} to a new class of channels; the extension requires new technical ideas. This result is in contrast to the coding schemes proposed in \cite{silva2010}, which restrict input matrices to have a specific form, and achieves capacity in the limiting cases considered. Their restrictions on the input allows for straightforward decoding, whereas it is not immediately obvious of how to construct an efficient UGR coding scheme which achieves capacity for any given parameters, indeed this is a problem of interest for future research. 

Corollary~\ref{capacity_corollary} to the main result of the paper provides an explicit method for computing the capacity of Gamma channels which maximises over a probability distribution over the set of possible ranks of input matrices, rather than the set of all input matrices itself. Thus we have reduced the problem of computing the capacity of a Gamma channel to an explicit maximisation over a set of variables of linear rather than exponential size. As examples of the results of this approach, the table below gives the computed capacity $C$ of the AMMC channel with $2$ errors for $n\times 2n$ matrices over $\mathbb{F}_2$; the capacity $C'$ of the Gamma channel  for $n\times 2n$ matrices over $\mathbb{F}_2$ when the number of errors is binomially distributed with expected number of errors equal to $2$; and the capacity $C''$ when the number of errors is  $0$, $1$ or $2$ with probabilities $1/7$, $2/7$ and $4/7$ respectively. 
\[
\begin{array}{c|ccccccccc}
n&3&4&5&6&7&8&9&10\\\hline
C&1.731&5.586&11.644&18.807&30.050&42.381&56.798&73.290\\
C'&1.117&4.633&10.422&18.368&28.395&40.491&54.676&70.996\\
C''&2.765&7.354&14.090&23.003&34.032&47.094&62.177&79.274
\end{array}
\]
Figure~\ref{fig:AMMC_capacity} plots the capacity $C$ of the AMMC channel together with general upper and lower bounds on the capacity. (These bounds are due to Silva et al.~\cite[Theorem~6 and~7]{silva2010}. We comment that an improved upper bound due to Claridge~\cite[Equation~6.6.2]{Claridge_thesis} is very similar to~\cite[Theorem~6]{silva2010} for these parameters.) Similarly, Figure~\ref{fig:twoerror_capacity} plots the capacity $C''$ of the third example channel together with the lower bound on the capacity due to Silva et al.~\cite[\S~VI.D]{silva2010}.
\begin{figure}
\begin{center}
\includegraphics[width=8cm]{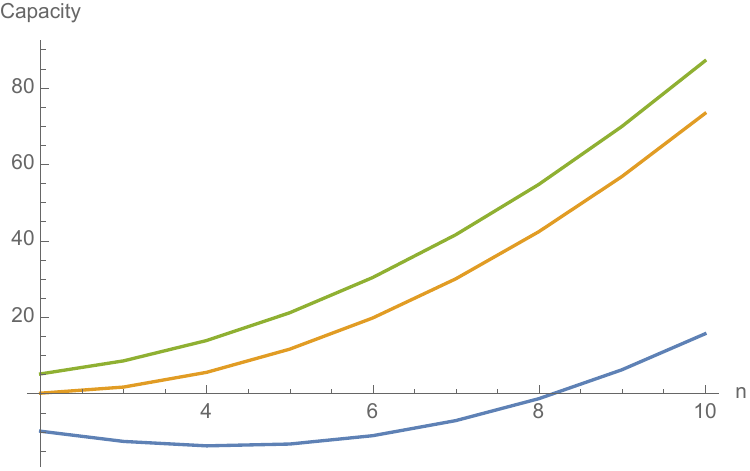}
\end{center}
\caption{Capacity (in bits) of the AMMC channel with $2$ errors for $n\times 2n$ matrices over $\mathbb{F}_2$. The three curves are: the upper bound from~\cite[Theorem~6]{silva2010}; the capacity computed using methods in this paper; the lower bound from~\cite[Theorem~7]{silva2010}.}
\label{fig:AMMC_capacity}
\end{figure}
\begin{figure}
\begin{center}
\includegraphics[width=8cm]{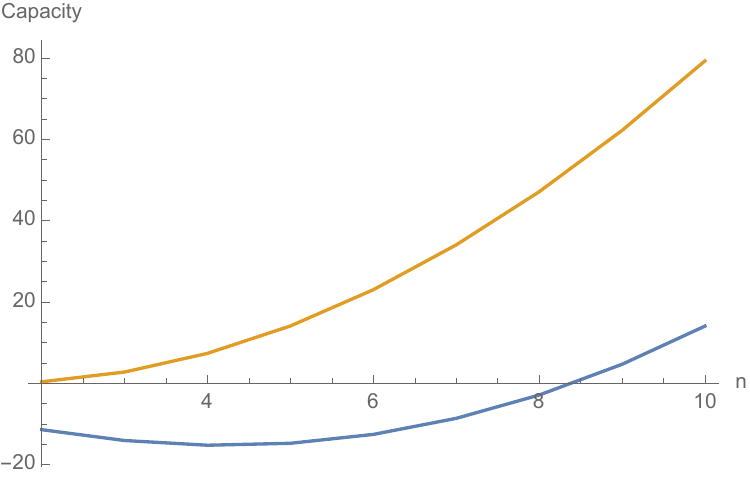}
\end{center}
\caption{Capacity $C''$ (in bits) of the matrix channel with $0$, $1$ and $2$ errors for $n\times 2n$ matrices over $\mathbb{F}_2$ specified in the text, together with the lower bound on the capacity from~\cite[\S~VI.D]{silva2010}.}
\label{fig:twoerror_capacity}
\end{figure}

The remainder of the paper is organised as follows. Section~\ref{sec:prelims} proves some preliminary results needed in what follows. In Section~\ref{sec:matrices} we state results from matrix theory that we use. Section \ref{sec:Gamma_rank_dist} establishes a relationship between the distributions of the ranks of input and output matrices for a Gamma channel. Section~\ref{sec:Gamma_UGR} proves Theorem~\ref{capacityUGR}, and Section~\ref{sec:Gamma_capacity} proves Corollary~\ref{capacity_corollary}, giving an exact expression for the capacity of the Gamma channels. In Section~\ref{sec:matrices_proofs} we prove the results from matrix theory that we use in earlier sections. Finally, Section~\ref{sec:conclusion} contains some concluding~remarks.


\section{Preliminaries on finite-dimensional vector spaces}
\label{sec:prelims}

In this section we discuss finite-dimensional vector spaces and consider several counting problems involving subspaces and quotient spaces. 

 The \textit{Gaussian binomial coefficient}, denoted $\qbinom{m}{d}$, is defined to be the number of $d$-dimensional subspaces of an $m$-dimensional space over $\mathbb{F}_q$. It is given by (e.g. \cite[\S 9.2]{Cameron1994})
\begin{equation}
\label{eq.def_GBCoeff}
\setlength{\nulldelimiterspace}{0pt}
\qbinom{m}{d}=\left\{%
\begin{array}{cc}
\displaystyle\prod_{i=0}^{d-1}\frac{(q^m-q^i)}{(q^d-q^i)},&\mathrm{for}\;\; d\leq m\\
0,&\mathrm{for}\;\; d>m.%
\end{array}\right.
\end{equation}

Let $V_1$ be a subspace of $V$. The following lemma gives the number of subspaces $U$ of $V$ where the intersection of $U$ and $V_1$, and the image of $U$ in the quotient space $V/V_1$, are both fixed.

\begin{lemma} \label{lemma_count_fix_int_image}
Let $V$ be a $d_V$-dimensional vector space. Let $V_1$, $V_2$ be subspaces of $V$, of dimensions $d_{V_1}$ and $d_{V_2}$ respectively, such that $V_2 \subseteq V_1$. The number of $d_U$-dimensional subspaces $U \subseteq V$ such that $U \cap V_1 =V_2$ and the image of $U$ in the quotient space $V/V_1$ is the fixed $d_U-d_{V_2}$ dimensional space $U'$, is given by
\[
q^{(d_{U}-d_{V_2})(d_{V_1} - d_{V_2})}.
\]
\end{lemma}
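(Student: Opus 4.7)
The plan is to set up an explicit bijection between the set of admissible subspaces $U$ and a vector space of tuples, and then simply count the tuples. Let $\pi\colon V\to V/V_1$ be the canonical projection. Any $U$ satisfying the hypotheses must contain $V_2=U\cap V_1$ and must satisfy $U\subseteq\pi^{-1}(U')$, and the first isomorphism theorem applied to $\pi|_U$ identifies $U/V_2$ with $U'$. Fix once and for all a basis $e_1,\ldots,e_k$ of $U'$, where $k=d_U-d_{V_2}$, and fix lifts $\tilde u_1,\ldots,\tilde u_k\in V$ with $\pi(\tilde u_i)=e_i$.

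The main step is to show that the map sending an admissible $U$ to the tuple $(x_1,\ldots,x_k)\in(V_1/V_2)^k$ defined as follows is a bijection. Given $U$, the isomorphism $U/V_2\cong U'$ produces, for each $i$, a unique coset $u_i+V_2\subseteq U$ with $\pi(u_i)=e_i$; set $x_i=(u_i-\tilde u_i)+V_2$, which lies in $V_1/V_2$ because $\pi(u_i-\tilde u_i)=0$. Conversely, from a tuple $(x_1,\ldots,x_k)$, pick lifts $y_i\in V_1$ of $x_i$ and put
\[
U \;=\; V_2+\Span(\tilde u_1+y_1,\,\ldots,\,\tilde u_k+y_k).
\]
One checks that the vectors $\tilde u_i+y_i$ are linearly independent modulo $V_2$ (their $\pi$-images are the independent $e_i$), so $\dim U=d_U$; that $\pi(U)=U'$ by construction; and that $U\cap V_1=V_2$, since any element of $U$ whose $\pi$-image vanishes must have all coefficients in the spanning set equal to $0$ (again using independence of the $e_i$ in $U'$), and hence must lie in $V_2$. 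Injectivity of the parametrization follows by the same coefficient-comparison argument: if two tuples produce the same $U$, then expressing $\tilde u_i+y_i$ in terms of a spanning set built from the other tuple and applying $\pi$ forces the coefficient structure, leaving only a difference in $V_2$.

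Once this bijection is in place, the count is immediate: the number of tuples is $|V_1/V_2|^k=(q^{d_{V_1}-d_{V_2}})^k=q^{(d_U-d_{V_2})(d_{V_1}-d_{V_2})}$. There is no real analytic obstacle here; the main care needed is the bookkeeping in the bijection, especially verifying that the tuple associated to a given $U$ is well-defined (independent of the choice of representative $u_i$ in its $V_2$-coset), and that the constructed $U$ from a tuple really intersects $V_1$ in exactly $V_2$ rather than something larger. Both are short but need the fact that $\pi$ restricted to $\Span(\tilde u_1+y_1,\ldots,\tilde u_k+y_k)$ is injective, which is the conceptual core of the argument.
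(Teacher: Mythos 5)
Your proof is correct and follows essentially the same strategy as the paper's: both parametrize the admissible subspaces $U$ by tuples in $(V_1/V_2)^{d_U-d_{V_2}}$, via fixed lifts of a basis of $U'$ and the observation that the remaining freedom in choosing generators of $U$ lies precisely in $V_1/V_2$. The paper phrases the bijection in terms of choices of bases while you phrase it via the map $\pi$ and its restriction to $U$, but the underlying counting argument is the same.
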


\begin{proof}
Fix a basis for $V_2$, say $\{ b_{1,1}, \dots , b_{1,d_{V_2}} \}$. 
Let $\pi : V \rightarrow V/V_1$ be the map which takes vectors in $V$ to their image in $V/V_1$. 
For $d_{U'}=d_{U}-d_{V_2}$, let $\{ y_1, \dots y_{d_{U'}} \}$ be a basis for $U'$, and let $\{b_{2,1}, \dots, b_{2,d_{U'}}\}$ be some vectors in $V$ such that $\pi(b_{2,i}) = y_i$, for $i=1, \dots , d_{U'}$. 

It is easy to check that every subspace $U$ of the form we are counting has a basis
\[
B=\{ b_{1,1}, \dots , b_{1,d_{V_2}} , v_1+b_{2,1}, \dots, v_{d_{U'}}+b_{2,d_{U'}} \}
\]
where $v_1,\dots, v_{d_{U'}} \in \ker \pi = V_1$. Moreover, all bases of this form span a subspace $U$ of the form we are counting, and a basis
\[
B'=\{ b_{1,1}, \dots , b_{1,d_{V_2}} , v_1'+b_{2,1}, \dots, v_{d_{U'}}'+b_{2,d_{U'}} \} 
\]
spans the same subspace as $B$ if and only if $[v_i]=[v_i']$ for $i=1,\dots,d_{U'}$, where $[v]$ denotes the image of a vector $v$ in the quotient space $V_1/V_2$.
Therefore there is a bijection between spaces $U$ of the required form and ordered sets $\{[v_1],\dots, [v_{d_{U'}}]\}$ of elements in the quotient space $V_1/V_2$.

For $i=1, \dots d_{U'}$, there are $q^{d_{V_1}-d_{V_2}}$ choices for $[v_i] \in V_1/V_2$, thus there are 
\begin{equation} \label{fixedspacecount}
q^{d_{U'}(d_{V_1}-d_{V_2})} = q^{(d_{U}-d_{V_2})(d_{V_1}-d_{V_2})}.
\end{equation}
 choices for the ordered set $\{[v_1] , \dots , [v_{d_{U'}}]\}$. 
The result follows.
\end{proof}


Given a vector space $V$ and a subspace $V_1\subseteq V$, Lemma~\ref{lemma_count_fix_int_image} can be used to count subspaces $U$ of $V$ when either $U\cap V_1$ is fixed, or the image of $U$ in $V/V_1$ is fixed, or when only the dimensions of these spaces are fixed. These results are given in the following three corollaries.

\begin{corollary} \label{lemma_lift_space}
Let $V$ be a $d_V$-dimensional vector space. Let $V_1$, $V_2$ be subspaces of $V$, of dimensions $d_{V_1}$ and $d_{V_2}$ respectively, such that $V_2 \subseteq V_1$. The number of $d_U$-dimensional subspaces $U \subseteq V$ such that $U \cap V_1 =V_2$, is given by
\[
q^{(d_{U}-d_{V_2})(d_{V_1} - d_{V_2})}
\qbinom{d_{V}-d_{V_1}}{d_{U}-d_{V_2}}.
\]
\end{corollary}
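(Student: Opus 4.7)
The plan is to deduce this directly from Lemma~\ref{lemma_count_fix_int_image} by summing (or rather, partitioning) over the possible images of $U$ in the quotient space $V/V_1$.

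Let $\pi : V \to V/V_1$ denote the canonical projection. Suppose $U$ is a $d_U$-dimensional subspace of $V$ satisfying $U \cap V_1 = V_2$. Then the restriction $\pi|_U : U \to V/V_1$ has kernel $U \cap V_1 = V_2$, which has dimension $d_{V_2}$, so the image $U' := \pi(U)$ is a subspace of $V/V_1$ of dimension exactly $d_U - d_{V_2}$. Thus the collection of subspaces we wish to count partitions according to which $(d_U - d_{V_2})$-dimensional subspace $U'$ of $V/V_1$ arises as the image.

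First I would count the number of possible images $U'$. Since $V/V_1$ has dimension $d_V - d_{V_1}$, the number of subspaces of $V/V_1$ of dimension $d_U - d_{V_2}$ is exactly the Gaussian binomial coefficient
\[
\qbinom{d_V - d_{V_1}}{d_U - d_{V_2}}
\]
by the definition (\ref{eq.def_GBCoeff}). Next, for each fixed choice of such an image $U'$, Lemma~\ref{lemma_count_fix_int_image} tells us that the number of $d_U$-dimensional subspaces $U \subseteq V$ with $U \cap V_1 = V_2$ and $\pi(U) = U'$ is
\[
q^{(d_U - d_{V_2})(d_{V_1} - d_{V_2})}.
\]
Crucially, this count does not depend on the particular subspace $U'$ chosen, so multiplying the two quantities yields the desired formula.

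There is essentially no obstacle here: the corollary is a straightforward repackaging of the previous lemma, with the only substantive observation being the dimension calculation $\dim \pi(U) = d_U - d_{V_2}$ that ensures we are summing over subspaces of $V/V_1$ of the correct dimension. Everything else is the fact that Lemma~\ref{lemma_count_fix_int_image} gives a count independent of the chosen image $U'$, which allows us to write the total as a simple product rather than a sum.
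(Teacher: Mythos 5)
Your proof is correct and follows essentially the same approach as the paper: partition the subspaces $U$ by their image $U'$ in $V/V_1$, count the $\qbinom{d_V-d_{V_1}}{d_U-d_{V_2}}$ choices of $U'$, and invoke Lemma~\ref{lemma_count_fix_int_image} for the count with $U'$ fixed. The only difference is that you make explicit the dimension computation $\dim\pi(U)=d_U-d_{V_2}$, which the paper leaves implicit.
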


\begin{proof}
The quotient space $V / V_1$ is a space of dimension $d_V-d_{V_1}$. 
Let $U'$ be a $(d_{U}-d_{V_2})$-dimensional subspace of $V / V_1$. There are 
\begin{equation*} 
\qbinom{d_{V}-d_{V_1}}{d_{U}-d_{V_2}}
\end{equation*}
possible choices for $U'$. For each such space $U'$,  
there are $q^{(d_{U}-d_{V_2})(d_{V_1} - d_{V_2})}$ possibilities for a space $U$ with whose image in the quotient $V/V_1$ is $U'$,  by Lemma~\ref{lemma_count_fix_int_image}. 
\end{proof}


\begin{corollary} \label{thm:subspacecount}
Let $V$ be a $d_V$-dimensional vector space. Let $V_1$ be a $d_{V_1}$-dimensional subspace of $V$. The number of $d_U$-dimensional subspaces $U \subseteq V$ such that the image of $U$ in the quotient space $V/V_1$ is some fixed $d_{U'}$ dimensional space $U'$, is given by
\begin{equation} \label{decomp}
q^{d_{U'}(d_{V_1} - (d_{U}-d_{U'}))}
\qbinom{d_{V_1}}{d_U-d_{U'}}.
\end{equation}
\end{corollary}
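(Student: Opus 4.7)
The plan is to reduce this corollary to Lemma~\ref{lemma_count_fix_int_image} by summing over the possible intersections $U\cap V_1$. The first observation is that the dimension of $U\cap V_1$ is forced: since the image of $U$ in $V/V_1$ is $U'$ of dimension $d_{U'}$, the quotient map restricted to $U$ has kernel $U\cap V_1$, so by the rank-nullity theorem $\dim(U\cap V_1)=d_U-d_{U'}$. Thus every $U$ we are counting has $U\cap V_1=V_2$ for some $(d_U-d_{U'})$-dimensional subspace $V_2\subseteq V_1$.

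Next I would partition the $U$'s according to the choice of $V_2=U\cap V_1$. The number of $V_2$ of dimension $d_U-d_{U'}$ inside $V_1$ is $\qbinom{d_{V_1}}{d_U-d_{U'}}$ by the definition of the Gaussian binomial coefficient. For each such fixed $V_2$, Lemma~\ref{lemma_count_fix_int_image} (with $d_{V_2}=d_U-d_{U'}$) yields that the number of $d_U$-dimensional $U\subseteq V$ with $U\cap V_1=V_2$ and image $U'$ in $V/V_1$ equals
\[
q^{(d_U-d_{V_2})(d_{V_1}-d_{V_2})}=q^{d_{U'}(d_{V_1}-(d_U-d_{U'}))}.
\]

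Since this count does not depend on which $V_2$ is chosen, the total is obtained by multiplication, giving
\[
q^{d_{U'}(d_{V_1}-(d_U-d_{U'}))}\qbinom{d_{V_1}}{d_U-d_{U'}},
\]
as required. There is no real obstacle here; the one thing to be careful of is verifying that the exponent coming out of Lemma~\ref{lemma_count_fix_int_image} rewrites exactly to the expression in~\eqref{decomp} after the substitution $d_{V_2}=d_U-d_{U'}$, which is a direct calculation.
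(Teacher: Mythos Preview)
Your proof is correct and follows essentially the same approach as the paper: both partition according to the subspace $V_2=U\cap V_1$ (necessarily of dimension $d_U-d_{U'}$), count the $\qbinom{d_{V_1}}{d_U-d_{U'}}$ choices for $V_2$, and then invoke Lemma~\ref{lemma_count_fix_int_image} for each fixed $V_2$. Your explicit justification via rank--nullity that $\dim(U\cap V_1)=d_U-d_{U'}$ is a small addition, but the structure of the argument is identical.
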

\begin{proof}
There are
\[
\qbinom{d_{V_1}}{d_U-d_{U'}}
\]
possible choices for  a $(d_{U}-d_{U'})$-dimensional subspace $V_2$ of $V_1$. For each choice of $V_1$, there are $q^{(d_{U}-(d_U-d_{U'}))(d_{V_1} - (d_U-d_{U'}))}=q^{d_{U'}(d_{V_1} - (d_{U}-d_{U'}))}$ possibilities for the space $U$ whose intersection with $V_1$ is the fixed space $V_2$, by Lemma~\ref{lemma_count_fix_int_image}.\end{proof}

\begin{corollary} \label{cor:count_spaces_fixed_dim_int}
Let $V$ be a $d_V$-dimensional vector space. Let $V_1$ be a $d_{V_1}$-dimensional subspace of $V$. For a subspace $U$ of $V$, let $[U]$ denote the image of $U$ in the quotient space $V/V_1$. The number of $d_U$-dimensional subspaces $U \subseteq V$ such that $\dim(U \cap V_1) = d_{UV_1}$ is equal to the number of $d_U$-dimensional subspaces $U$ such that $\dim([U])=d_U-d_{UV_1}$. This number is equal to
\[
q^{(d_{U}-d_{UV_1})(d_{V_1} - d_{UV_1})}
\qbinom{d_{V}-d_{V_1}}{d_{U}-d_{UV_1}}
\qbinom{d_{V_1}}{d_{UV_1}}
.
\]
\end{corollary}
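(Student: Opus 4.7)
The plan is to prove the statement in two short steps, both of which reduce to results already established.

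First I would verify the equivalence of the two counting problems. For any subspace $U$ of $V$, consider the quotient map $\pi\colon V\to V/V_1$ restricted to $U$. Its kernel is $U\cap V_1$ and its image is $[U]$, so by rank-nullity
\[
\dim([U]) \;=\; \dim U - \dim(U\cap V_1) \;=\; d_U - d_{UV_1}.
\]
Thus the condition $\dim(U\cap V_1)=d_{UV_1}$ is equivalent to $\dim([U])=d_U-d_{UV_1}$, and the two counts described in the statement are counts of the same set of subspaces.

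Second, I would compute this count by partitioning the $d_U$-dimensional subspaces $U$ according to the intersection $V_2:=U\cap V_1$. The space $V_2$ is a $d_{UV_1}$-dimensional subspace of $V_1$, so the number of possibilities for $V_2$ is $\qbinom{d_{V_1}}{d_{UV_1}}$. For each fixed such $V_2$, Corollary~\ref{lemma_lift_space} gives the number of $d_U$-dimensional subspaces $U\subseteq V$ with $U\cap V_1 = V_2$ as
\[
q^{(d_U-d_{UV_1})(d_{V_1}-d_{UV_1})}\qbinom{d_V-d_{V_1}}{d_U-d_{UV_1}}.
\]
Since this quantity depends only on the dimensions involved, and not on the particular $V_2$, multiplying by the number of choices of $V_2$ yields the claimed formula.

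There is no real obstacle here: once the equivalence of the two descriptions is noted via rank-nullity, the enumeration is an immediate summation of Corollary~\ref{lemma_lift_space} over a Gaussian-binomial-many choices of $V_2\subseteq V_1$. (One could equivalently partition instead according to the image $[U]$ and invoke Corollary~\ref{thm:subspacecount}; both routes give the same expression.)
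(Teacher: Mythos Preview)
Your proposal is correct and follows essentially the same approach as the paper: the equivalence of the two descriptions is established via the rank--nullity relation, and the count is obtained by first choosing the intersection $V_2=U\cap V_1$ (giving the factor $\qbinom{d_{V_1}}{d_{UV_1}}$) and then counting the $U$'s with that fixed intersection. The only cosmetic difference is that you invoke Corollary~\ref{lemma_lift_space} for the second step, whereas the paper unpacks that corollary and appeals directly to Lemma~\ref{lemma_count_fix_int_image} after also choosing the image $U'$ in $V/V_1$; the two routes are the same argument at different levels of packaging.
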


\begin{proof}
Note that $\dim(U \cap V_1)=d_{UV_1}$ if and only if $\dim([U])=d_{U}-d_{UV_1}$ hence the first statement of the lemma holds. 
Let $V_2$ be a $(d_{UV_1})$-dimensional subspace of $V_1$. There are
\[
\qbinom{d_{V_1}}{d_{UV_1}}
\]
possible choices for $V_2$, and
\begin{equation*} 
\qbinom{d_{V}-d_{V_1}}{d_{U}-d_{UV_1}}
\end{equation*}
possible choices for a $(d_{U}-d_{UV_1})$-dimensional subspace $U'$ of $V/V_1$. For each choice of $V_2$ and $U'$, Lemma~\ref{lemma_count_fix_int_image} implies there are $q^{(d_{U}-d_{UV_1})(d_{V_1} - d_{UV_1})}$ possibilities for the space $U$ whose intersection with $V_1$ is the fixed space $V_2$, and image in the quotient $V/V_1$ is the fixed space $U'$.\end{proof}


\section{Matrices over finite fields}
\label{sec:matrices}

This short section describes the notation and results we use from the theory of matrices over finite fields.

Let $q$ be a non-trivial prime power, that is $q=p^n$  for some prime $p$ and integer $n\geq1$. Let $\bF_q$ be the finite field of order $q$. In the introduction we defined $\bF_q^{n\times m}$ to be the set of $n\times m$ matrices with entries in $\bF_q$, we defined $\bF_q^{n\times m,r}$ to be the matrices in $\bF_q^{n\times m}$ of rank~$r$, and we defined $\GL(n,q)$ to be the set of invertible matrices in $\bF_q^{n\times n}$.

For a matrix $M$, we write $\rank(M)$ for the rank of $M$ and we write $\row(M)$ for the row space of $M$.

\begin{lemma}
\label{f_0_lemma}
Let $U$ be a subspace of $\bF_q^m$ of dimension $u$. The number $f_0(u)$ of matrices $M\in\bF_q^{n\times m}$ such that $\row(M)=U$ can be efficiently computed; it depends only on $q$, $n$, $m$ and $u$. For $0 \leq u \leq \min\{n,m\}$,
\begin{align} 
f_0(u) &= \prod_{i=0}^{u-1} q^n-q^i \label{eq:f0_Gabidulin}
\\
&=\sum_{v=0}^{u}  (-1)^{u-v}  q^{nv + \binom{u-v}{2}} \qbinom{u}{v} . \label{eq:f0}
\end{align}
\end{lemma}

By an efficient computation, we mean a polynomial (in $\max\{n,m\}$) number of arithmetic operations. Gabidulin~\cite[Theorem 4]{Gabidulin85} establishes~\eqref{eq:f0_Gabidulin}, and~\eqref{eq:f0} follows from~\cite[Equation 13]{Gabidulin85}. Therefore Lemma~\ref{f_0_lemma} immediately follows. 

The following results will be proved in Section~\ref{sec:matrices_proofs}.

\begin{lemma}
\label{f_1_lemma}
Let $U$ and $V$ be subspaces of $\bF_q^m$ of dimensions $u$ and $v$ respectively. Let $h=\dim (U\cap V)$. Let $M\in\bF_q^{n\times m}$ be a fixed matrix such that $\row(M)=U$. Let $r$ be a non-negative integer. The number of matrices $B\in\bF_q^{n\times m,r}$ such that $\row(B+M)=V$ can be efficiently computed; it depends only on $q$, $n$, $m$, $r$, $u$, $v$ and $h$. We write $f_1(u,v,h; r)$ for the number of matrices $B$ of this form.
\end{lemma}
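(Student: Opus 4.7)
The plan is to reduce to a canonical choice of $M$, $U$, and $V$ by symmetry, and then count via a block decomposition. First I would show the count depends only on $q,n,m,r,u,v,h$: for any $P \in \GL(n,q)$ and $T \in \GL(m,q)$, the map $B \mapsto PBT$ is a rank-preserving bijection on $\bF_q^{n \times m}$ that sends $\{B : \rank(B)=r,\,\row(B+M)=V\}$ bijectively onto the analogous set for $(PMT,\,VT)$. Choosing $P,T$ so that $PMT = \left(\begin{smallmatrix}I_u & 0 \\ 0 & 0\end{smallmatrix}\right)$ and $VT = \langle e_1, \dots, e_h, e_{u+1}, \dots, e_{u+v-h}\rangle$ (possible because $\dim U = u$, $\dim V = v$, and $\dim(U \cap V) = h$) reduces the problem to this canonical setting.

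Next I would parametrise by $N = B+M$. The condition $\row(N) \subseteq V$ forces $N$ to be zero outside columns $\{1,\dots,h\} \cup \{u+1,\dots,u+v-h\}$, and $\row(N) = V$ then becomes $\rank(N) = v$. In the canonical form, $B = N - M$ contains a $-I_{u-h}$ block in the middle rows $\{h+1,\dots,u\}$ at columns $\{h+1,\dots,u\}$, independent of $N$, with zeros elsewhere in those columns. Column operations exploiting this identity block reduce $\rank(B)$ to $(u-h) + \rank(\tilde B)$, where $\tilde B$ is $B$ restricted to the outer rows $\{1,\dots,h\} \cup \{u+1,\dots,n\}$ and to the nonzero columns of $N$; crucially $\tilde B$ depends only on the outer-row entries of $N$. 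Finally I would count as follows: for each $s$, enumerate outer configurations $A$ (an $(h+n-u) \times v$ matrix) with $\rank(A - \tilde M) = r-u+h$ and $\rank(A) = s$, where $\tilde M = \left(\begin{smallmatrix}I_h & 0 \\ 0 & 0\end{smallmatrix}\right)$, then multiply by the number of middle-row extensions of $N$ bringing its total rank up to $v$, which equals $q^{s(u-h)} \prod_{i=0}^{v-s-1}(q^{u-h} - q^i)$, depending only on $s$. The outer count is handled by a further block decomposition adapted to the shape of $\tilde M$, combined with the corollaries of Section~\ref{sec:prelims} and $f_0$-type counts.

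The main obstacle is that $\rank(B)=r$ and $\row(B+M)=V$ are coupled through overlapping matrix entries. The canonical-form reduction is the decisive step: the identity block of $M$ in the ``$U \setminus V$'' columns permits column operations that remove those columns from the rank computation for $B$, effectively decoupling the middle rows from the rank-$r$ condition on $B$ and leaving them to contribute only to the rank-$v$ condition on $N$. The residual subproblem---counting outer configurations $A$ with two simultaneous rank constraints (on $A$ and on $A - \tilde M$)---is of the same flavour but of smaller size, and is handled by a second layer of block decomposition together with the preliminary lemmas. The resulting expression is a polynomially bounded sum of products of Gaussian binomial coefficients and powers of $q$, hence efficiently computable.
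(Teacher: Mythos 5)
Your canonical-form reduction and the split into middle/outer rows is a genuinely different route from the paper's, and the parts you work out are correct. The $GL(n,q)\times GL(m,q)$-equivariance argument is a clean, direct way to establish that the count depends only on $q,n,m,r,u,v,h$, which the paper never argues explicitly (it just produces a formula depending only on those quantities). The column-operation step is also correct: the $-I_{u-h}$ block in columns $\{h+1,\dots,u\}$ is supported only on the middle rows and those columns carry no contribution from $N$, so clearing the middle-row entries of the other columns gives exactly $\rank(B)=(u-h)+\rank(\tilde B)$, and the count of middle-row extensions of an outer block $A$ of rank $s$ to a full-column-rank $N$ is indeed $q^{s(u-h)}\prod_{i=0}^{v-s-1}(q^{u-h}-q^i)$.

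The gap is the final sentence. The residual ``outer count'' you are left with --- the number of $(h+n-u)\times v$ matrices $A$ with $\rank(A)=s$ and $\rank(A-\tilde M)=r-u+h$, where $\tilde M$ has rank $h$ --- is precisely an instance of the function $f_2$ (in different ambient dimensions). It is not of smaller ``type'' than the problem you started with, and in the paper $f_2$ is computed \emph{from} $f_1$ (Theorem~\ref{f_2_thm}), so invoking it would be circular. Nor is it clear that the parameters decrease in a well-founded way under your reduction: the ambient column count becomes $v$ (which can equal $m$), and when $h=u$ the outer block is the whole matrix. A two-simultaneous-rank-constraint count does not decouple under a naive block decomposition adapted to $\tilde M=\left(\begin{smallmatrix}I_h&0\\0&0\end{smallmatrix}\right)$: writing $A=\left(\begin{smallmatrix}A_{11}&A_{12}\\A_{21}&A_{22}\end{smallmatrix}\right)$, the conditions $\rank(A)=s$ and $\rank(A-\tilde M)$ fixed both depend jointly on all four blocks, and the corollaries of Section~\ref{sec:prelims} and the $f_0$ count do not on their own resolve this. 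This is exactly the hard core of the lemma --- the paper spends Sections~\ref{sec:MF_counting_subspaces} and~\ref{sec:MF_pairs_subspaces} plus a two-variable M\"obius inversion on it --- so ``a further block decomposition'' needs to be made precise before the argument is complete. As written, the proposal pushes the difficulty into the last step without dispatching it.
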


\begin{lemma}
\label{f_2_lemma}
Let $r$, $r_B$ and $r_X$ be non-negative integers. Let $X$ be a fixed matrix such that $\rank(X)=r_X$. The number of matrices $B\in\bF_q^{n\times m,r_B}$ such that $\rank(X+B)=r$ can be efficiently computed; it depends only on $q$, $n$, $m$, $r$, $r_B$ and $r_X$. We write $f_2(r,r_X,r_B)$ for the number of matrices $B$ of this form.
\end{lemma}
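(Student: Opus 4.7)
The plan is to reduce the count in $f_2$ to a sum of terms of the form $f_1(r_X,r,h;r_B)$ weighted by how many $r$-dimensional subspaces $V$ of $\bF_q^m$ meet $\row(X)$ in a fixed dimension~$h$. First I would fix $X \in \bF_q^{n\times m}$ with $\rank(X)=r_X$, set $U=\row(X)$, and partition the set of matrices to be counted by the row space of $X+B$:
\[
f_2(r,r_X,r_B) = \sum_{\substack{V \subseteq \bF_q^m \\ \dim V = r}} \bigl|\{B\in\bF_q^{n\times m,r_B} : \row(X+B)=V\}\bigr|.
\]

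Next I would apply Lemma~\ref{f_1_lemma}: for each such $V$, the inner cardinality equals $f_1(r_X,r,\dim(U\cap V);r_B)$, which depends on $V$ only through the intersection dimension $h:=\dim(U\cap V)$. Grouping subspaces by this dimension gives
\[
f_2(r,r_X,r_B) = \sum_{h} N(h)\, f_1(r_X,r,h;r_B),
\]
where $N(h)$ is the number of $r$-dimensional subspaces $V$ of $\bF_q^m$ with $\dim(U\cap V)=h$. Applying Corollary~\ref{cor:count_spaces_fixed_dim_int} with $V_1=U$ (so $d_V=m$, $d_{V_1}=r_X$, $d_U=r$, $d_{UV_1}=h$) evaluates $N(h)$ as
\[
N(h) = q^{(r-h)(r_X-h)} \qbinom{m-r_X}{r-h} \qbinom{r_X}{h},
\]
which depends only on $q,m,r,r_X,h$.

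Combining these, one obtains an explicit closed-form expression
\[
f_2(r,r_X,r_B) = \sum_{h=\max(0,\,r+r_X-m)}^{\min(r,r_X)} q^{(r-h)(r_X-h)} \qbinom{m-r_X}{r-h} \qbinom{r_X}{h}\, f_1(r_X,r,h;r_B),
\]
which manifestly depends only on $q,n,m,r,r_X,r_B$ (the dependence on $n$ entering through $f_1$) and is efficiently computable since each $f_1$ term is, by Lemma~\ref{f_1_lemma}. In particular the total is independent of the particular choice of $X$ of rank $r_X$, which is what justifies writing it as $f_2(r,r_X,r_B)$. I do not expect any serious obstacle here: the two preceding lemmas together with the subspace-intersection count from Section~\ref{sec:prelims} do essentially all the work; the only thing to verify with care is that the range of $h$ is handled correctly (with the Gaussian binomials vanishing outside the stated range, so one may even sum over all $h\geq 0$).
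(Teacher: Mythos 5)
Your proof is correct and follows essentially the same route as the paper: partition by the row space $V=\row(X+B)$, invoke Lemma~\ref{f_1_lemma} to reduce each inner count to $f_1(r_X,r,\dim(\row(X)\cap V);r_B)$, group by $h=\dim(\row(X)\cap V)$, and count the $V$'s via Corollary~\ref{cor:count_spaces_fixed_dim_int}. Your remark about the range of $h$ is a correct and slightly more careful version of what the paper states, since the vanishing of $\qbinom{m-r_X}{r-h}$ for $h<r+r_X-m$ makes both summation ranges give the same value.
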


In Section~\ref{sec:matrices_proofs}, Theorems~\ref{thm:row(M+B)=Ucount} and~\ref{f_2_thm} we give exact expressions for the functions $f_1$ and $f_2$ respectively in terms of their inputs and the values $q$, $n$ and $m$, from which Lemmas~\ref{f_1_lemma} and~\ref{f_2_lemma} follow immediately. 

We comment that the function $f_2$ has connections with rank metric codes (see e.g.~\cite{gadouleau2008packing},~\cite{silva2008rank} for example). For a fixed matrix $X$ of rank $r_X$, the function $f_2(r_X,r_B,r)$ gives the number of matrices $B$ of rank $r_B$ such that $\rank(X+B)=r$. This is equal to the number of matrices $B'$ of rank $r_B$ such that $\rank(X-B')=r$ (setting $B'=-B$).
The \emph{rank distance} is a metric defined for two matrices $M_1, M_2\in\bF_q^{n\times m}$ to be
\[
d_R(M_1,M_2)=\rank(M_1-M_2).
\]
Therefore, the value $f_2(r_X,r_B,r)$ gives the number of matrices of rank $r_B$, that have rank distance $r$ from some fixed matrix of rank $r_X$. Or equivalently, considering the space of all $n\times m$ matrices over $\bF_q$, $f_2(r_X,r_B,r)$ is the volume of intersection of two spheres with rank radii $r_X$ and $r_W$ with centres at rank distance $r$. The analysis of the volume of intersection of spheres in the rank metric space can lead to the development of covering properties for rank metric codes, as explored by Gadouleau and Yan~\cite{Gadouleau2009}. In~\cite[Lemma 1]{Gadouleau2009}, the authors give an expression for the function $f_2$, showing that indeed it is efficiently computable. The expression they give was developed using the theory of association schemes. In Section~\ref{sec:matrices_proofs} we give an expression for $f_2$ that avoids this theory, using direct counting arguments. Thus our new formula and proof give extra insight.



\section{Input and output rank distributions} \label{sec:Gamma_rank_dist}

A distribution $\mathcal{P}_{\boldsymbol{X}}$ on the input set $\mathcal{X}$ of the Gamma channel induces a distribution (the \emph{input rank distribution}) $\mathcal{R}_{\boldsymbol{X}}$ on the set of possible ranks of input matrices. Let $\mathcal{R}_{\boldsymbol{Y}}$ be the corresponding \emph{output rank distribution}, induced from the distribution on the output set of the Gamma channel. A key result (Lemma~\ref{RYdef}) is that $\mathcal{R}_{\boldsymbol{Y}}$ depends on only the channel parameters and $\mathcal{R}_{\boldsymbol{X}}$ (rather than on $\mathcal{P}_{\boldsymbol{X}}$ itself). This section aims to prove this result: it will play a vital role in the proof of Theorem~\ref{capacityUGR} below.

\begin{definition} \label{rho_def}
Let $r, r_X, r_B \in \{0, \dots , \min\{n,m\} \}$. Define
\[
\rho(r;r_X,r_B)=\frac{f_2(r,r_X,r_B)}{|\bF_q^{n \times m, r_B}|},
\]
where $f_2$ is as defined in Lemma~\ref{f_2_lemma}.
For any fixed matrix $X\in\bF_q^{n\times m, r_X}$, we see that $\rho(r;r_X,r_B)$ gives the proportion of matrices $B\in\bF_q^{n\times m,r_B}$ with $\rank(X+B)=r$. 
Let $\mathcal{R}$ be a probability distribution on the set $\{0,1,\dots,\min\{n,m\}\}$ of possible ranks of $n\times m$ matrices. Define
\[
\rho(r;r_X)
=
\sum_{r_B=0}^{\min\{n,m\}} 
\mathcal{R}(r_B) \rho(r;r_X,r_B),
\]
so that $\rho(r;r_X)$ gives the weighted average of this proportion over the possible ranks of matrices $B$.
\end{definition}

\begin{lemma} \label{lemma:rho}
Let $\boldsymbol{X}$ be an $n \times m$ matrix sampled from some distribution $\mathcal{P}_{\boldsymbol{X}}$ on~$\bF_q^{n\times m}$. Let $\boldsymbol{B}$ be an $n \times m$ matrix sampled from a UGR distribution with rank distribution $\mathcal{R}$, where $\boldsymbol{X}$ and $\boldsymbol{B}$ are chosen independently. Let $r,r_X,r_B \in \{0, \dots , \min\{n,m\} \}$. Then
\begin{equation} \label{eq:rho_lemma_1}
\rho(r;r_X,r_B)=\Prob(\rank(\boldsymbol{X}+\boldsymbol{B})=r | \rank (\boldsymbol{X}) = r_X\text{ and }
 \rank(\boldsymbol{B})=r_B),
\end{equation}
and
\begin{equation} \label{eq:rho_lemma_2}
\rho(r;r_X)=\Prob(\rank(\boldsymbol{X}+\boldsymbol{B})=r | \rank (\boldsymbol{X}) = r_X) .
\end{equation}
\end{lemma}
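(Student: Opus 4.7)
The plan is to prove the two equations of Lemma~\ref{lemma:rho} directly from the definitions, using Lemma~\ref{f_2_lemma} for the first equation and the total probability law together with the independence of $\boldsymbol{X}$ and $\boldsymbol{B}$ for the second.

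For equation~\eqref{eq:rho_lemma_1}, I would first decompose the conditional probability by conditioning further on the value of $\boldsymbol{X}$. For any fixed matrix $X_0\in\bF_q^{n\times m,r_X}$ in the support of $\mathcal{P}_{\boldsymbol{X}}$, the key fact is that $\boldsymbol{B}$, conditioned on $\rank(\boldsymbol{B})=r_B$, is uniformly distributed over $\bF_q^{n\times m,r_B}$ (this is immediate from the UGR assumption), and $\boldsymbol{B}$ is independent of $\boldsymbol{X}$. Hence
\[
\Prob(\rank(X_0+\boldsymbol{B})=r\mid \boldsymbol{X}=X_0,\rank(\boldsymbol{B})=r_B)
=\frac{|\{B\in\bF_q^{n\times m,r_B}:\rank(X_0+B)=r\}|}{|\bF_q^{n\times m,r_B}|}.
\]
By Lemma~\ref{f_2_lemma}, the numerator is $f_2(r,r_X,r_B)$, which depends only on $r_X$ and not on the particular $X_0$ of rank $r_X$. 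The right hand side is therefore $\rho(r;r_X,r_B)$. Averaging over the distribution of $\boldsymbol{X}$ conditioned on $\rank(\boldsymbol{X})=r_X$ leaves the same value, which establishes~\eqref{eq:rho_lemma_1}.

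For equation~\eqref{eq:rho_lemma_2}, I would apply the law of total probability by conditioning on $\rank(\boldsymbol{B})$:
\[
\Prob(\rank(\boldsymbol{X}+\boldsymbol{B})=r\mid \rank(\boldsymbol{X})=r_X)
=\sum_{r_B=0}^{\min\{n,m\}}\Prob(\rank(\boldsymbol{X}+\boldsymbol{B})=r\mid \rank(\boldsymbol{X})=r_X,\rank(\boldsymbol{B})=r_B)\cdot\Prob(\rank(\boldsymbol{B})=r_B\mid \rank(\boldsymbol{X})=r_X).
\]
Independence of $\boldsymbol{X}$ and $\boldsymbol{B}$ gives $\Prob(\rank(\boldsymbol{B})=r_B\mid\rank(\boldsymbol{X})=r_X)=\mathcal{R}(r_B)$, and the first factor is $\rho(r;r_X,r_B)$ by part~\eqref{eq:rho_lemma_1}. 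Substituting matches the definition of $\rho(r;r_X)$.

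There is no real obstacle here; the only subtle point is justifying in part~\eqref{eq:rho_lemma_1} that the conditional probability does not depend on the choice of $X_0$ of rank $r_X$. This is exactly the content of Lemma~\ref{f_2_lemma}, which guarantees that the count $f_2(r,r_X,r_B)$ is a function of the rank of $X_0$ alone. Without this invariance, averaging over $\boldsymbol{X}$ would leave an expression depending on $\mathcal{P}_{\boldsymbol{X}}$ rather than on $r_X$ alone, and the clean statement of the lemma would fail.
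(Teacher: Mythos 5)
Your proof is correct and matches the paper's argument essentially line for line: fix a matrix $X_0$ of rank $r_X$, use the UGR property to express the conditional probability as a ratio of cardinalities, invoke Lemma~\ref{f_2_lemma} to see this depends only on $r_X$, then average over $\boldsymbol{X}$; and obtain~\eqref{eq:rho_lemma_2} by the law of total probability over $\rank(\boldsymbol{B})$ combined with independence. The subtle point you flag (invariance in $X_0$) is exactly the observation the paper makes after its equation~\eqref{eq:rho_lemma_pf_1}.
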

\begin{proof}
Let $X$ be a fixed $n \times m$ matrix of rank $r_X$. Then, since $\boldsymbol{B}$ has a UGR distribution,
\begin{align}
\Prob&(\rank(X+\boldsymbol{B})=r|\rank(\boldsymbol{B})=r_B) \notag \\
&=
\frac{
|\{B \in \mathbb{F}_q^{n \times m, r_B} : \rank(X+B)= r \} |}
{|\mathbb{F}_q^{n \times m, r_B} |} \notag
\\
&=\frac{
f_2(r,r_X,r_B)}
{|\mathbb{F}_q^{n \times m, r_B} |} \notag
\\
&=\rho(r;r_X,r_B). \label{eq:rho_lemma_pf_1}
\end{align}
Note that \eqref{eq:rho_lemma_pf_1} only depends on $\rank(X)$, not $X$ itself. Hence 
\begin{align*}
&\Prob(\rank(\boldsymbol{X}+\boldsymbol{B})=r | \rank (\boldsymbol{X}) = r_X, \rank(\boldsymbol{B})=r_B)
\\
&= \sum_X \Prob(\boldsymbol{X}=X)
\Prob(\rank(X+\boldsymbol{B})=r|\rank(\boldsymbol{B})=r_B) \notag
\\
&= \sum_{X}\Prob(\boldsymbol{X}=X)
 \rho(r;r_X,r_B) 
\\
&= \rho(r;r_X,r_B), 
\end{align*}
where the sums are over matrices $X\in \bF_q^{n \times m , r_X}$.
Thus \eqref{eq:rho_lemma_1} holds. Also
\begin{align*}
\Prob&(\rank(\boldsymbol{X}+\boldsymbol{B})=r | \rank (\boldsymbol{X}) = r_X)
\\
&=
\sum_{r_B=0}^{\min\{n,m\}} \mathcal{R}(r_B)  \rho(r;r_X,r_B) \text{ (by~\eqref{eq:rho_lemma_1})}
\\
&=\rho(r;r_X). 
\end{align*}
Thus \eqref{eq:rho_lemma_2} holds, and so the lemma follows.
\end{proof}

\begin{lemma} \label{RYdef}
For the Gamma channel $\Gamma(\mathcal{R})$ with input rank distribution $\mathcal{R}_{\boldsymbol{X}}$, the output rank distribution is given by
\[
\mathcal{R}_{\boldsymbol{Y}}(r)
=
\sum_{r_X,r_B=0}^{\min \{ n, m \} }
\mathcal{R}_{\boldsymbol{X}}(r_X) 
 \mathcal{R}(r_B) 
\frac{f_2(r,r_X,r_B)}{|\bF_q^{n \times m, r_B}|}
\]
for $r = 1, \dots, \min \{ n, m \}$. In particular, $\mathcal{R}_{\boldsymbol{Y}}$ depends only on the input rank distribution (and the channel parameters), not on the input distribution itself.
\end{lemma}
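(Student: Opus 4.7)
The proof should be a short and direct consequence of Lemma~\ref{lemma:rho}, once one makes the key observation that left multiplication by the invertible matrix $\boldsymbol{A}$ preserves rank. So the plan is as follows.

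First, I would note that since $\boldsymbol{A} \in \GL(n,q)$, we have $\rank(\boldsymbol{Y}) = \rank(\boldsymbol{A}(\boldsymbol{X}+\boldsymbol{B})) = \rank(\boldsymbol{X}+\boldsymbol{B})$ for every realisation of $\boldsymbol{A}, \boldsymbol{X}, \boldsymbol{B}$. In particular, the random variable $\rank(\boldsymbol{Y})$ has the same distribution as $\rank(\boldsymbol{X}+\boldsymbol{B})$, and the matrix $\boldsymbol{A}$ plays no further role in the computation. Hence for each $r \in \{0,1,\ldots,\min\{n,m\}\}$,
\[
\mathcal{R}_{\boldsymbol{Y}}(r) = \Prob(\rank(\boldsymbol{X}+\boldsymbol{B}) = r).
\]

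Next, I would apply the law of total probability, conditioning on $\rank(\boldsymbol{X})$:
\[
\mathcal{R}_{\boldsymbol{Y}}(r) = \sum_{r_X=0}^{\min\{n,m\}} \mathcal{R}_{\boldsymbol{X}}(r_X) \, \Prob(\rank(\boldsymbol{X}+\boldsymbol{B}) = r \mid \rank(\boldsymbol{X}) = r_X).
\]
By equation~\eqref{eq:rho_lemma_2} of Lemma~\ref{lemma:rho}, the inner conditional probability equals $\rho(r;r_X)$. Expanding $\rho(r;r_X)$ using its definition as a weighted average over $r_B$ gives
\[
\mathcal{R}_{\boldsymbol{Y}}(r) = \sum_{r_X=0}^{\min\{n,m\}} \mathcal{R}_{\boldsymbol{X}}(r_X) \sum_{r_B=0}^{\min\{n,m\}} \mathcal{R}(r_B) \, \rho(r;r_X,r_B),
\]
and then substituting $\rho(r;r_X,r_B) = f_2(r,r_X,r_B)/|\bF_q^{n\times m, r_B}|$ from Definition~\ref{rho_def} yields exactly the claimed expression.

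Finally, I would observe that the resulting formula depends only on $\mathcal{R}_{\boldsymbol{X}}$, the error rank distribution $\mathcal{R}$, and the channel parameters $q$, $n$, $m$; the underlying distribution $\mathcal{P}_{\boldsymbol{X}}$ enters only through its rank marginal. This establishes the final sentence of the lemma. There is no real obstacle: the work has already been done in Lemma~\ref{lemma:rho}, and all that remains here is to record the fact that invertibility of $\boldsymbol{A}$ reduces the problem to that lemma and then unwind the definition of $\rho$.
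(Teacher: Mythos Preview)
Your proposal is correct and follows essentially the same approach as the paper: both arguments reduce to Lemma~\ref{lemma:rho} after noting (implicitly in the paper, explicitly in your write-up) that invertibility of $\boldsymbol{A}$ forces $\rank(\boldsymbol{Y})=\rank(\boldsymbol{X}+\boldsymbol{B})$. The only cosmetic difference is that the paper conditions on $(r_X,r_B)$ jointly and invokes~\eqref{eq:rho_lemma_1}, whereas you condition on $r_X$ first via~\eqref{eq:rho_lemma_2} and then expand $\rho(r;r_X)$; these are trivially equivalent.
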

\begin{proof} We have that $\Prob(\rank(\boldsymbol{X})=r_X)=\mathcal{R}_{\boldsymbol{X}}(r_X)$ and $\Prob(\rank(\boldsymbol{B})=r_B)=\mathcal{R}(r_B)$. Hence, by~\eqref{eq:rho_lemma_1},
\begin{align*}
\mathcal{R}_{\boldsymbol{Y}}(r)&=
\Prob(\rank(\boldsymbol{Y})=r) \\
&=
\sum_{r_X,r_B=0}^{\min \{ n, m \} }
\mathcal{R}_{\boldsymbol{X}}(r_X) 
 \mathcal{R}(r_B) 
 \rho \left( r_Y ; r_X , r_B \right)
\\
&=
\sum_{r_X,r_B=0}^{\min \{ n, m \} }
\mathcal{R}_{\boldsymbol{X}}(r_X) 
 \mathcal{R}(r_B) 
\frac{f_2(r,r_X,r_B)}{|\bF_q^{n \times m, r_B}|}.\qedhere
\end{align*}
\end{proof}


\section{A UGR input distribution achieves capacity}
\label{sec:Gamma_UGR}

This section shows (Theorem~\ref{capacityUGR}) that there exists a UGR input distribution to the Gamma channel that achieves capacity.


\begin{lemma} \label{H(Y)=H(Y')}
Let $M$ and $M'$ be fixed $n \times m$ matrices of the same rank. Let $\boldsymbol{B}$ be an $n \times m$ matrix picked from a UGR distribution, and let $\boldsymbol{A}$ be an $n \times n$ matrix picked uniformly from $\GL(n,q)$, with $\boldsymbol{B}$ and $\boldsymbol{A}$ picked independently. Let $\boldsymbol{Y}=\boldsymbol{A}(M+\boldsymbol{B})$ and let $\boldsymbol{Y'}=\boldsymbol{A}(M'+\boldsymbol{B})$. Then
\[
H(\boldsymbol{Y})=H(\boldsymbol{Y'}).
\]
\end{lemma}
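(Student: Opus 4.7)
The plan is to prove that $\boldsymbol{Y}$ and $\boldsymbol{Y'}$ are in fact equidistributed up to right multiplication by a fixed invertible matrix, which is a bijection on $\bF_q^{n\times m}$ and hence preserves entropy. The key algebraic input is the standard fact that two $n\times m$ matrices over $\bF_q$ have the same rank if and only if they are equivalent under left and right multiplication by invertible matrices.

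First, since $\rank(M)=\rank(M')$, I would fix $P_L\in\GL(n,q)$ and $P_R\in\GL(m,q)$ with $M'=P_LMP_R$. Substituting into the definition of $\boldsymbol{Y'}$ and regrouping factors gives
\[
\boldsymbol{Y'}=\boldsymbol{A}(P_LMP_R+\boldsymbol{B})=(\boldsymbol{A}P_L)\bigl(M+P_L^{-1}\boldsymbol{B}P_R^{-1}\bigr)P_R.
\]
Writing $\boldsymbol{A}'=\boldsymbol{A}P_L$ and $\boldsymbol{B}'=P_L^{-1}\boldsymbol{B}P_R^{-1}$, this becomes $\boldsymbol{Y'}=\boldsymbol{A}'(M+\boldsymbol{B}')P_R$.

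Next, I would verify three distributional facts. (i) Since right multiplication by $P_L$ is a bijection on $\GL(n,q)$, the matrix $\boldsymbol{A}'$ is uniform on $\GL(n,q)$. (ii) Since left multiplication by $P_L^{-1}$ and right multiplication by $P_R^{-1}$ preserve rank and act as bijections on $\bF_q^{n\times m,r}$ for each $r$, and since the UGR property depends only on the rank stratification together with uniformity on each stratum, the matrix $\boldsymbol{B}'$ is UGR with the same rank distribution $\mathcal{R}$ as $\boldsymbol{B}$. (iii) $\boldsymbol{A}'$ and $\boldsymbol{B}'$ are deterministic functions of the independent variables $\boldsymbol{A}$ and $\boldsymbol{B}$, so they remain independent. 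Therefore $\boldsymbol{A}'(M+\boldsymbol{B}')$ is equidistributed with $\boldsymbol{Y}=\boldsymbol{A}(M+\boldsymbol{B})$, and consequently $\boldsymbol{Y'}$ is equidistributed with $\boldsymbol{Y}P_R$. Because $P_R$ is invertible, the map $Y\mapsto YP_R$ is a bijection on $\bF_q^{n\times m}$, giving $H(\boldsymbol{Y'})=H(\boldsymbol{Y}P_R)=H(\boldsymbol{Y})$.

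There is no substantial obstacle in this argument; the only subtlety is the verification that $\boldsymbol{B}'$ remains UGR with the original rank distribution, which requires noting explicitly that two-sided multiplication by invertibles permutes each set $\bF_q^{n\times m,r}$ and so leaves the defining conditional-uniform property of a UGR distribution unchanged.
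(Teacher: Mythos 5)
Your proof is correct and rests on the same core ingredients as the paper's: two matrices of the same rank are left-right equivalent, the uniform distribution on $\GL(n,q)$ is invariant under right translation by a fixed invertible matrix, and a UGR distribution is invariant under two-sided multiplication by fixed invertibles. Where you differ, usefully, is in how these facts are deployed. The paper fixes $A\in\GL(n,q)$, chooses $R$ and $C$ with $AM'=RAMC$, sets $\varphi(Y)=RYC$, proves $\Prob(\boldsymbol{Y}=Y\mid\boldsymbol{A}=A)=\Prob(\boldsymbol{Y'}=\varphi(Y)\mid\boldsymbol{A}=A)$, and then sums over $A$; for that sum to produce a statement of the form $\Prob(\boldsymbol{Y}=Y)=\Prob(\boldsymbol{Y'}=\varphi(Y))$ one needs the relabeling $\varphi$ to be independent of $A$, a point the stated choice of $R,C$ (which is made after $A$ is fixed) leaves implicit. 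Your version sidesteps this entirely: fixing $P_L,P_R$ once with $M'=P_LMP_R$ and regrouping $\boldsymbol{Y'}=(\boldsymbol{A}P_L)(M+P_L^{-1}\boldsymbol{B}P_R^{-1})P_R$ reduces the claim to the observation that $(\boldsymbol{A}P_L,\,P_L^{-1}\boldsymbol{B}P_R^{-1})$ has the same joint law as $(\boldsymbol{A},\boldsymbol{B})$, so $\boldsymbol{Y'}$ is equidistributed with $\boldsymbol{Y}P_R$; the relabeling $Y\mapsto YP_R$ is a single fixed bijection, and the entropy identity follows at once. This is a tighter formulation of the same idea, making the key invariances do all the work without any conditioning on $\boldsymbol{A}$.
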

\begin{proof}
Let $A$ be a fixed $n \times n $ invertible matrix. 
Since the matrices $AM$ and $AM'$ have the same rank, there exist invertible matrices $R$ and $C$ such that $AM'=RAMC$. Consider the linear transformation 
$\varphi : \mathbb{F}_q^{n \times m} \rightarrow \mathbb{F}_q^{n \times m}$ defined by
$\varphi(\boldsymbol{Y}) = R\boldsymbol{Y}C$. It is simple to check that $\varphi$ is well defined and a bijection. Note that 
\begin{align*}
\varphi (A(M+\boldsymbol{B})) &= RAMC + RA\boldsymbol{B}C 
\\
 &= A(M' + A^{-1}RA\boldsymbol{B}C).
\end{align*}
Since $\boldsymbol{B}$ is picked uniformly once its rank is determined, pre- and post-multiplying $\boldsymbol{B}$ by fixed invertible matrices gives a uniform matrix of the same rank, therefore $\boldsymbol{B}$ and $A^{-1}RA\boldsymbol{B}C$ have the same distribution.
Now
\begin{align}
\Prob&\left(\boldsymbol{Y}=Y | \boldsymbol{A}=A\right)\notag\\
&= \Prob\left(A(M+\boldsymbol{B})=Y\right) \notag
\\
&= \Prob\left(\varphi(A(M+\boldsymbol{B}))=\varphi(Y)\right) \notag
\\
&= \Prob\left(A(M' + A^{-1}RA\boldsymbol{B}C)=\varphi(Y)\right) \notag
\\
 \label{Wsamedist}
&= \Prob\left(A(M' + \boldsymbol{B})=\varphi(Y)\right)
\\
 \label{P(Y'=RyC|A)=P(Y=y|A)}
&=\Prob\left(\boldsymbol{Y'}=\varphi(Y) | \boldsymbol{A}=A\right),
\end{align}
where \eqref{Wsamedist} holds since the distributions of $\boldsymbol{B}$ and $A^{-1}RA\boldsymbol{B}C$ are the same. 
Since \eqref{P(Y'=RyC|A)=P(Y=y|A)} is true for any fixed matrix $A$, it follows that
\begin{align}
\Prob(\boldsymbol{Y}=Y)
&= \!\!\!\sum_{A \in \GL(n,q)} \!\!\!\!\! \Prob(\boldsymbol{A}=A) \Prob(\boldsymbol{Y} =Y | \boldsymbol{A}=A) \notag
\\
&= \!\!\!\sum_{A \in \GL(n,q)} \!\!\!\!\! \Prob(\boldsymbol{A}=A) \Prob(\boldsymbol{Y'} =\varphi(Y) | \boldsymbol{A}=A) \notag
\\
 \label{P(Y=y)=P(Y'=RyC)}
&= \Prob (\boldsymbol{Y'}=\varphi(Y)).
\end{align}
Thus $\boldsymbol{Y}$ and $\boldsymbol{Y'}$ have the same distribution, up to relabeling by $\varphi$. In particular, we find that $H(\boldsymbol{Y})=H(\boldsymbol{Y'})$. 
\end{proof}


\begin{definition}  \label{def:hr}
Let $M$ be any $n \times m$ matrix of rank $r$.
Let $\boldsymbol{A}$ be an $n \times n$ invertible matrix chosen uniformly from $\GL(n,q)$. Let $\boldsymbol{B}$ be an $n \times m$ matrix chosen from a UGR distribution with rank distribution $\mathcal{R}$, where $\boldsymbol{A}$ and $\boldsymbol{B}$ are picked independently. 
We define
\[
h_r = H \left( \boldsymbol{A}(M+\boldsymbol{B}) \right).
\]

\end{definition}
Lemma~\ref{H(Y)=H(Y')} implies that the value $h_r$ does not depend on $M$, only on the rank $r$ and the channel parameters $q,n,m$ and $\mathcal{R}$. The exact value of $h_r$ will be calculated later, in Theorem~\ref{thm:h_r=}.


\begin{lemma} \label{H(Y|X)onlyRX}
Consider the Gamma channel $\Gamma(\mathcal{R})$. Let the input matrix $\boldsymbol{X}$ be sampled from a distribution $\mathcal{P}_{\boldsymbol{X}}$ with associated rank distribution $\mathcal{R}_{\bs{X}}$, and let $\boldsymbol{Y}$ be the corresponding output matrix. Then
\[
H(\boldsymbol{Y}|\boldsymbol{X})=  \sum_{r =0}^{\min\{n,m\}} \mathcal{R}_{\boldsymbol{X}}(r)h_r.
\]
In particular, $H(\boldsymbol{Y}|\boldsymbol{X})$ depends only on the associated input rank distribution $\mathcal{R}_{\boldsymbol{X}}$ and the channel parameters.
\end{lemma}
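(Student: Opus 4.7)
The plan is to unwind the standard decomposition of conditional entropy by conditioning on each possible value of $\boldsymbol{X}$, and then regroup the sum according to the rank of the conditioning matrix. The content of the lemma is essentially that, thanks to Lemma~\ref{H(Y)=H(Y')}, the inner entropy $H(\boldsymbol{Y}\mid\boldsymbol{X}=X)$ depends on $X$ only through $\rank(X)$, so only the rank distribution of $\boldsymbol{X}$ survives.

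First I would write
\[
H(\boldsymbol{Y}\mid\boldsymbol{X})=\sum_{X\in\bF_q^{n\times m}}\Prob(\boldsymbol{X}=X)\,H(\boldsymbol{Y}\mid\boldsymbol{X}=X).
\]
Next, for a fixed matrix $X$ of rank $r$, the channel law gives $\boldsymbol{Y}=\boldsymbol{A}(X+\boldsymbol{B})$ with $\boldsymbol{A}$ and $\boldsymbol{B}$ chosen independently of $\boldsymbol{X}$ (with $\boldsymbol{A}$ uniform on $\GL(n,q)$ and $\boldsymbol{B}$ UGR with rank distribution $\mathcal{R}$), so
\[
H(\boldsymbol{Y}\mid\boldsymbol{X}=X)=H\bigl(\boldsymbol{A}(X+\boldsymbol{B})\bigr).
\]
By Lemma~\ref{H(Y)=H(Y')}, this quantity depends only on $\rank(X)$, so it equals $h_{\rank(X)}$ by Definition~\ref{def:hr}.

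Finally I would group the outer sum by rank:
\[
H(\boldsymbol{Y}\mid\boldsymbol{X})=\sum_{r=0}^{\min\{n,m\}}h_r\sum_{X\in\bF_q^{n\times m,r}}\Prob(\boldsymbol{X}=X)=\sum_{r=0}^{\min\{n,m\}}\mathcal{R}_{\boldsymbol{X}}(r)\,h_r,
\]
using the definition of the induced rank distribution $\mathcal{R}_{\boldsymbol{X}}(r)=\sum_{X\in\bF_q^{n\times m,r}}\Prob(\boldsymbol{X}=X)$. Since $h_r$ itself depends only on $r$, $q$, $n$, $m$ and $\mathcal{R}$ (again via Lemma~\ref{H(Y)=H(Y')}), the resulting expression depends only on $\mathcal{R}_{\boldsymbol{X}}$ and the channel parameters, giving the final clause of the lemma.

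There is no real obstacle here: all the work was done in Lemma~\ref{H(Y)=H(Y')}, whose rank-invariance guarantees that $H(\boldsymbol{Y}\mid\boldsymbol{X}=X)$ is constant on each rank stratum of $\bF_q^{n\times m}$. The only point to be slightly careful about is to observe that $\boldsymbol{A}$ and $\boldsymbol{B}$ are (by the definition of the Gamma channel) independent of $\boldsymbol{X}$, so that conditioning on $\boldsymbol{X}=X$ really does reduce $\boldsymbol{Y}$ to $\boldsymbol{A}(X+\boldsymbol{B})$ with the same joint law of $(\boldsymbol{A},\boldsymbol{B})$ as in Definition~\ref{def:hr}.
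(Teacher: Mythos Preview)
Your proof is correct and follows essentially the same route as the paper: expand $H(\boldsymbol{Y}\mid\boldsymbol{X})$ as a sum over $X$, observe that $H(\boldsymbol{Y}\mid\boldsymbol{X}=X)=H(\boldsymbol{A}(X+\boldsymbol{B}))=h_{\rank(X)}$ via Lemma~\ref{H(Y)=H(Y')} and Definition~\ref{def:hr}, then regroup by rank. Your explicit remark about the independence of $(\boldsymbol{A},\boldsymbol{B})$ from $\boldsymbol{X}$ is a welcome clarification but not a departure from the paper's argument.
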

\begin{proof} Choosing $\boldsymbol{A}$ and $\boldsymbol{B}$ as in the definition of the Gamma channel, we see that
\begin{align*}
H(\boldsymbol{Y}|\boldsymbol{X})
&= \sum_{X \in \mathcal{X}} P(\boldsymbol{X}=X)H(\boldsymbol{A}(X+\boldsymbol{B}))
\\
&= \sum_{X \in \mathcal{X}} P(\boldsymbol{X}=X)h_{\rank(X)}
\\
&=  \sum_{r =0}^{\min\{n,m\}} \mathcal{R}_{\boldsymbol{X}}(r)h_r,
\end{align*}
which establishes the first assertion of the lemma. The second assertion follows since $h_r$ depends only on $r$ and the channel parameters.
\end{proof}


The following lemma is a well known result, see for example \cite[Ex. 2.28]{cover2012elements}.

\begin{lemma} \label{H(Y2)geqH(Y1)}
Let $\boldsymbol{Y_1}$ and $\boldsymbol{Y_2}$ be two random $n \times m$ matrices, sampled from distributions with the same associated rank distribution $\mathcal{R}_{\boldsymbol{Y}}$. If the distribution of $\boldsymbol{Y_2}$ is UGR then $H(\boldsymbol{Y_2}) \geq H(\boldsymbol{Y_1})$.
\end{lemma}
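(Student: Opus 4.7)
The plan is to decompose the entropy of $\boldsymbol{Y_i}$ by conditioning on its rank, and then exploit the fact that the uniform distribution maximises entropy on any finite set.

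First I would write, for each $i\in\{1,2\}$,
\[
H(\boldsymbol{Y_i}) = H(\boldsymbol{Y_i},\rank(\boldsymbol{Y_i})) = H(\rank(\boldsymbol{Y_i})) + H(\boldsymbol{Y_i}\mid \rank(\boldsymbol{Y_i})),
\]
using that $\rank(\boldsymbol{Y_i})$ is a deterministic function of $\boldsymbol{Y_i}$. Since $\boldsymbol{Y_1}$ and $\boldsymbol{Y_2}$ have the same associated rank distribution $\mathcal{R}_{\boldsymbol{Y}}$, the terms $H(\rank(\boldsymbol{Y_1}))$ and $H(\rank(\boldsymbol{Y_2}))$ are equal, so it suffices to compare the conditional entropies.

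Next I would expand
\[
H(\boldsymbol{Y_i}\mid \rank(\boldsymbol{Y_i})) = \sum_{r=0}^{\min\{n,m\}} \mathcal{R}_{\boldsymbol{Y}}(r)\,H(\boldsymbol{Y_i}\mid \rank(\boldsymbol{Y_i})=r).
\]
For each $r$ with $\mathcal{R}_{\boldsymbol{Y}}(r)>0$, the conditional distribution of $\boldsymbol{Y_i}$ given $\rank(\boldsymbol{Y_i})=r$ is supported on $\mathbb{F}_q^{n\times m,r}$, so by the standard fact that uniform distributions maximise entropy on a finite set,
\[
H(\boldsymbol{Y_i}\mid \rank(\boldsymbol{Y_i})=r) \leq \log\bigl|\mathbb{F}_q^{n\times m,r}\bigr|,
\]
with equality if and only if this conditional distribution is uniform on $\mathbb{F}_q^{n\times m,r}$. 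By the definition of a UGR distribution, the conditional distribution of $\boldsymbol{Y_2}$ given $\rank(\boldsymbol{Y_2})=r$ is exactly this uniform distribution, so equality holds for $\boldsymbol{Y_2}$ while only an inequality holds for $\boldsymbol{Y_1}$. Summing these termwise inequalities against the common weights $\mathcal{R}_{\boldsymbol{Y}}(r)$ gives $H(\boldsymbol{Y_2}\mid \rank(\boldsymbol{Y_2})) \geq H(\boldsymbol{Y_1}\mid \rank(\boldsymbol{Y_1}))$, and combining with the equality of the rank entropies yields the result.

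There is no real obstacle here: the proof is essentially bookkeeping around the entropy chain rule, together with the one nontrivial ingredient that the UGR distribution is uniform on each rank stratum. The only minor subtlety is handling ranks $r$ with $\mathcal{R}_{\boldsymbol{Y}}(r)=0$, but such terms contribute nothing to either sum and can be ignored by the usual convention $0\log 0 = 0$.
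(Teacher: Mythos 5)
Your proof is correct and follows essentially the same route as the paper's: decompose $H(\boldsymbol{Y_i})$ via the chain rule into $H(\rank(\boldsymbol{Y_i}))$ plus the rank-conditional entropy, note the rank entropies agree, and use that the UGR distribution is uniform on each rank stratum to bound the conditional entropy term by term. No meaningful difference from the paper's argument.
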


\begin{lemma} \label{YUGR}
Consider the Gamma channel $\Gamma(\mathcal{R})$. If the input distribution $\mathcal{P}_{\boldsymbol{X}}$ is UGR then the induced output distribution $\mathcal{P}_{\boldsymbol{Y}}$ is also UGR.
\end{lemma}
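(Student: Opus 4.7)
The plan is to show directly that $\Prob(\boldsymbol{Y}=Y)$ depends only on $\rank(Y)$; together with the fact that rank fully determines when two matrices may be identified, this is precisely the UGR property. Fix two matrices $Y,Y'\in\bF_q^{n\times m}$ of the same rank $r$. By standard matrix normal forms there exist invertible matrices $R\in\GL(n,q)$ and $C\in\GL(m,q)$ with $Y'=RYC$. My goal is to show that $\Prob(\boldsymbol{Y}=Y)=\Prob(\boldsymbol{Y}=Y')$.

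To do this I would exploit a symmetry of the channel law under two-sided multiplication by fixed invertible matrices, in the spirit of the $\varphi$-argument used in the proof of Lemma~\ref{H(Y)=H(Y')}. Consider the triple $(\boldsymbol{A},\boldsymbol{X},\boldsymbol{B})$ as in the definition of $\Gamma(\mathcal{R})$ with $\boldsymbol{X}$ UGR, and form the new triple $(R\boldsymbol{A},\,\boldsymbol{X}C,\,\boldsymbol{B}C)$. I claim this has the same joint distribution as $(\boldsymbol{A},\boldsymbol{X},\boldsymbol{B})$. Independence is preserved because we apply deterministic maps to each coordinate separately. Left multiplication by $R$ permutes $\GL(n,q)$, so $R\boldsymbol{A}$ is again uniform on $\GL(n,q)$. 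The key observation is that right multiplication by $C$ preserves each rank stratum $\bF_q^{n\times m,s}$ and acts on it as a bijection; hence if $\boldsymbol{M}$ is UGR with rank distribution $\mathcal{S}$ then $\boldsymbol{M}C$ is also UGR with the same rank distribution. This handles both $\boldsymbol{X}C$ and $\boldsymbol{B}C$.

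Now compute the channel output for the new triple:
\[
(R\boldsymbol{A})\bigl((\boldsymbol{X}C)+(\boldsymbol{B}C)\bigr)=R\boldsymbol{A}(\boldsymbol{X}+\boldsymbol{B})C=R\boldsymbol{Y}C.
\]
Hence
\[
\Prob(\boldsymbol{Y}=Y)=\Prob\bigl(\boldsymbol{A}(\boldsymbol{X}+\boldsymbol{B})=Y\bigr)=\Prob\bigl(R\boldsymbol{A}(\boldsymbol{X}C+\boldsymbol{B}C)=RYC\bigr)=\Prob(\boldsymbol{Y}=Y'),
\]
where the middle equality uses the equality in joint distribution of the two triples, and the last uses $Y'=RYC$. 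Since $Y$ and $Y'$ were arbitrary matrices of the same rank, $\Prob(\boldsymbol{Y}=Y)$ depends only on $\rank(Y)$, which is exactly the UGR property.

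The main (and really only) obstacle is the invariance claim for UGR distributions under right multiplication by a fixed $C\in\GL(m,q)$; everything else is an algebraic manipulation of the channel law. This invariance is a short check from the definition: the map $M\mapsto MC$ sends $\bF_q^{n\times m,s}$ bijectively to itself, so a uniform matrix on that stratum maps to a uniform matrix on the same stratum, and the rank distribution of $\boldsymbol{M}$ is therefore unchanged by post-multiplication by $C$.
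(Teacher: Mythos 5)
Your proof is correct, and it takes a genuinely different route from the paper's. The paper proceeds computationally: it first sums over $X$ to show that $\Prob(\boldsymbol{X}+\boldsymbol{B}=D)$ depends only on $\rank(D)$ (using the function $f_2$), concluding $\boldsymbol{X}+\boldsymbol{B}$ is UGR, and then observes that left-multiplying a UGR matrix by any fixed invertible $A$ preserves the UGR property; the result follows by averaging over $A$. You instead use an equivariance argument under the two-sided action $(R,C)\colon M\mapsto RMC$ of $\GL(n,q)\times\GL(m,q)$: the triple $(R\boldsymbol{A},\boldsymbol{X}C,\boldsymbol{B}C)$ is equidistributed with $(\boldsymbol{A},\boldsymbol{X},\boldsymbol{B})$, because left multiplication permutes $\GL(n,q)$, right multiplication by an invertible $C$ permutes each rank stratum $\bF_q^{n\times m,s}$, and deterministic coordinatewise maps preserve independence; feeding this through the channel law gives $\Prob(\boldsymbol{Y}=Y)=\Prob(\boldsymbol{Y}=RYC)$ for all invertible $R,C$, and since any two matrices of equal rank are related this way, $\Prob(\boldsymbol{Y}=Y)$ is a function of $\rank(Y)$, which is exactly the UGR property. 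Your symmetry argument is shorter, more conceptual, and avoids any explicit counting; the paper's approach is more pedestrian but keeps the computation fully explicit, which fits its broader program of expressing everything through $f_0,f_1,f_2$. One small presentational note: your chain of equalities could be clarified by naming the map $\psi(A,X,B)=(RA,XC,BC)$ and explicitly invoking that $\psi$ preserves the joint law — the middle step as written conflates a trivial algebraic rewrite with the distributional substitution, and a reader could momentarily think no probabilistic content was used.
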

\begin{proof}
Suppose the input distribution is UGR, with rank distribution $\mathcal{R}_{\boldsymbol{X}}$. We start by showing that the distribution of $\boldsymbol{X}+\boldsymbol{B}$ is UGR. Let $D$ be any $n \times m$ matrix. Then
\begin{align*}
&\Prob(\boldsymbol{X}+\boldsymbol{B}= D) \notag
\\
&=\!\!\!
\sum_{X \in \mathbb{F}_q^{n \times m}} \Prob(\boldsymbol{X}=X)  \Prob(\boldsymbol{X}+\boldsymbol{B}=D | \boldsymbol{X}=X ) \notag
\\
&= \!\!\!
\sum_{X \in \mathbb{F}_q^{n \times m}}
\frac{\mathcal{R}_{\boldsymbol{X}}(\rank(X))}{|\mathbb{F}_q^{n \times m,\rank(X)}|} 
\Prob(X+\boldsymbol{B}=D) ,
\end{align*}
since $\boldsymbol{X}$ is sampled from a UGR distribution. Hence
\begin{align*}
&\Prob(\boldsymbol{X}+\boldsymbol{B}= D)\\
&= 
\sum_{r=0}^{\min\{n,m\}}
 \frac{\mathcal{R}_{\boldsymbol{X}}(r)}{|\mathbb{F}_q^{n \times m,r}|}
\sum_{X \in \mathbb{F}_q^{n \times m,r}}
\Prob(\boldsymbol{B}=D-X) \\
&=
\sum_{r=0}^{\min\{n,m\}}
 \frac{\mathcal{R}_{\boldsymbol{X}}(r)}{|\mathbb{F}_q^{n \times m,r}|}
\sum_{X \in \mathbb{F}_q^{n \times m,r}}
\frac{\mathcal{R}(\rank(D-X))}{|\mathbb{F}_q^{n \times m,\rank(D-X)}|},
\end{align*}
since $\boldsymbol{X}$ and $\boldsymbol{B}$ are independent, and since $\boldsymbol{B}$ has a UGR distribution with rank distribution $\mathcal{R}$. Now
\begin{align*}
&\sum_{X \in \mathbb{F}_q^{n \times m,r}}
\frac{\mathcal{R}(\rank(D-X))}{|\mathbb{F}_q^{n \times m,\rank(D-X)}|}\\
&=
\sum_{r_B =0}^{\min\{n,m\}}
|\{ X \in \mathbb{F}_q^{n \times m , r}: \rank(D-X)=r_B \} | 
\frac{\mathcal{R}(r_B) }{|\mathbb{F}_q^{n \times m,r_B}|}
\\
&=
\sum_{r_B =0}^{\min\{n,m\}}
f_2(r_B,\rank(D),r)  
\frac{\mathcal{R}(r_B) }{|\mathbb{F}_q^{n \times m,r_B}|} 
\end{align*}
and so
\begin{align*}
&\Prob(\boldsymbol{X}+\boldsymbol{B}= D)\\
&=\!\!\!\!\sum_{r=0}^{\min\{n,m\}}
 \frac{\mathcal{R}_{\boldsymbol{X}}(r)}{|\mathbb{F}_q^{n \times m,r}|}
\!\!\!\!\sum_{r_B =0}^{\min\{n,m\}}
f_2(r_B,\rank(D),r)  
\frac{\mathcal{R}(r_B) }{|\mathbb{F}_q^{n \times m,r_B}|}.
\end{align*}
So $\Prob(\boldsymbol{X}+\boldsymbol{B}= D)$ does not depend on the specific matrix~$D$, only its rank. Therefore, given any two $n \times m$ matrices $D_1, D_2$ of the same rank,
\[
\Prob(\boldsymbol{X}+\boldsymbol{B}=D_1) = \Prob(\boldsymbol{X}+\boldsymbol{B}=D_2).
\]
Hence $\boldsymbol{X}+\boldsymbol{B}$ has a UGR distribution. 

Let $A$ be a fixed $n \times n$ invertible matrix. 
Since $\boldsymbol{X}+\boldsymbol{B}$ is picked uniformly once its rank is determined, multiplying $\boldsymbol{X}+\boldsymbol{B}$ by the invertible matrix $A$ will give a uniform matrix of the same rank, therefore $A(\boldsymbol{X}+\boldsymbol{B})$ has a UGR distribution. So, defining $\boldsymbol{Y}=\boldsymbol{A}(\boldsymbol{X}+\boldsymbol{B})$ to be the output of the Gamma channel, we see that for any $n\times m$ matrix $Y$
\begin{align*}
\Prob(\boldsymbol{Y}=Y | \boldsymbol{A}=A)
&= \Prob(A(\boldsymbol{X}+\boldsymbol{B})=Y) 
\\
&= \frac{\Prob\left(\rank(A(\boldsymbol{X}+\boldsymbol{B}))=\rank(Y) \right)}{ |\mathbb{F}_q^{n \times m, \rank(Y)}|}
\\
&= 
\frac{\Prob\left(\rank(\boldsymbol{Y})=\rank(Y) | \boldsymbol{A}=A\right)}{ |\mathbb{F}_q^{n \times m, \rank(Y)}|},
\end{align*}
where the second equality follows since $A(\boldsymbol{X}+\boldsymbol{B})$ has a UGR distribution.
Thus 
\begin{align}
\Prob(\boldsymbol{Y}=Y)&= \!\!\!\!\sum_{A \in \GL(n,q)} \!\!\!\!\!\Prob(\boldsymbol{A}=A)  \Prob(\boldsymbol{Y}=Y | \boldsymbol{A}=A) \notag
\\
&=  \!\!\!\!\sum_{A \in \GL(n,q)} \!\!\!\!\!\Prob(\boldsymbol{A}=A)  \frac{\Prob(\rank(\boldsymbol{Y})=\rank(Y) | \boldsymbol{A}=A)}{ |\mathbb{F}_q^{n \times m, \rank(Y)}|} \notag
\\
&= \frac{1}{ |\mathbb{F}_q^{n \times m, \rank(Y)}|}  \Prob(\rank(\boldsymbol{Y})=\rank(Y) ). \label{S5eqYUGR}
\end{align}
Since \eqref{S5eqYUGR} holds for all $Y \in \mathbb{F}_q^{n \times m}$ it follows that $\boldsymbol{Y}$ has a UGR distribution.
\end{proof}


\begin{theorem} \label{capacityUGR}
For the Gamma channel $\Gamma(\mathcal{R})$, there exists a UGR input distribution that achieves channel capacity. Moreover, given any input distribution $\mathcal{P}_{\bs{X}}$ with associated rank distribution $\mathcal{R}_{\bs{X}}$, if $\mathcal{P}_{\bs{X}}$ achieves capacity then the UGR distribution with rank distribution $\mathcal{R}_{\bs{X}}$ achieves capacity.
\end{theorem}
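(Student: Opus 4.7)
The plan is to compare an arbitrary input distribution $\mathcal{P}_{\bs{X}}$ against the UGR distribution $\mathcal{P}_{\bs{X}}'$ having the same associated rank distribution $\mathcal{R}_{\bs{X}}$, and show that the UGR version is at least as good in terms of mutual information. Writing capacity as $C=\max_{\mathcal{P}_{\bs{X}}}\bigl(H(\bs{Y})-H(\bs{Y}\mid\bs{X})\bigr)$, the strategy is to show that swapping $\mathcal{P}_{\bs{X}}$ for $\mathcal{P}_{\bs{X}}'$ leaves $H(\bs{Y}\mid\bs{X})$ unchanged while weakly increasing $H(\bs{Y})$. Both claims fall out of the lemmas already established in this section and Section~\ref{sec:Gamma_rank_dist}.

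First I would fix an arbitrary input distribution $\mathcal{P}_{\bs{X}}$ with associated rank distribution $\mathcal{R}_{\bs{X}}$, and let $\mathcal{P}_{\bs{X}}'$ denote the UGR distribution with the same rank distribution. Let $\bs{Y}$ and $\bs{Y}'$ denote the corresponding output matrices of $\Gamma(\mathcal{R})$. By Lemma~\ref{H(Y|X)onlyRX}, the conditional entropy $H(\bs{Y}\mid\bs{X})$ depends only on $\mathcal{R}_{\bs{X}}$ and the channel parameters, so
\[
H(\bs{Y}\mid\bs{X})=H(\bs{Y}'\mid\bs{X}')=\sum_{r=0}^{\min\{n,m\}}\mathcal{R}_{\bs{X}}(r)\,h_r.
\]
By Lemma~\ref{RYdef}, the output rank distributions of $\bs{Y}$ and $\bs{Y}'$ are equal, since both depend only on $\mathcal{R}_{\bs{X}}$ and the channel parameters.

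Next I would apply Lemma~\ref{YUGR}: since $\mathcal{P}_{\bs{X}}'$ is UGR, the induced output distribution of $\bs{Y}'$ is UGR. Combining this with the fact that $\bs{Y}$ and $\bs{Y}'$ have the same output rank distribution, Lemma~\ref{H(Y2)geqH(Y1)} yields $H(\bs{Y}')\ge H(\bs{Y})$. Therefore
\[
I(\bs{X}';\bs{Y}')=H(\bs{Y}')-H(\bs{Y}'\mid\bs{X}')\ge H(\bs{Y})-H(\bs{Y}\mid\bs{X})=I(\bs{X};\bs{Y}).
\]
This comparison holds for every input distribution, so taking a capacity-achieving $\mathcal{P}_{\bs{X}}$ (which exists by standard compactness/continuity arguments for a discrete memoryless channel with finite input alphabet) yields a UGR input distribution $\mathcal{P}_{\bs{X}}'$ with $I(\bs{X}';\bs{Y}')\ge I(\bs{X};\bs{Y})=C$, and equality must hold. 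This gives both assertions of the theorem simultaneously.

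There is no real obstacle: all the technical content has been packaged into Lemmas~\ref{H(Y|X)onlyRX}, \ref{H(Y2)geqH(Y1)}, \ref{YUGR}, and~\ref{RYdef}. The only subtlety worth flagging explicitly is that the argument works precisely because three quantities---the output rank distribution, the conditional entropy $H(\bs{Y}\mid\bs{X})$, and the entropy-maximising property of UGR outputs---line up so that \emph{rank-distribution-preserving} replacement of $\mathcal{P}_{\bs{X}}$ by its UGR version never decreases mutual information. In writing up, I would make sure to state the comparison as an inequality between $I(\bs{X}';\bs{Y}')$ and $I(\bs{X};\bs{Y})$ valid for every $\mathcal{P}_{\bs{X}}$, rather than invoking a specific capacity-achiever at the outset, since this gives the ``moreover'' clause for free.
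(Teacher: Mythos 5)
Your proposal is correct and uses exactly the same chain of lemmas (Lemmas~\ref{H(Y|X)onlyRX}, \ref{RYdef}, \ref{YUGR}, \ref{H(Y2)geqH(Y1)}) and the same decomposition $I=H(\bs{Y})-H(\bs{Y}\mid\bs{X})$ as the paper's proof. The only cosmetic difference is that you state the comparison for an arbitrary $\mathcal{P}_{\bs{X}}$ before specialising to a capacity-achiever, whereas the paper picks the capacity-achiever at the outset; the content is identical.
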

\begin{proof}
Let $\boldsymbol{X_1}$ be a channel input, with output $\boldsymbol{Y_1}$ such that $\mathcal{P}_{\boldsymbol{X_1}}$ is a capacity achieving input distribution. That is $\max_{\mathcal{P}_{\boldsymbol{X}}} \{ I(\boldsymbol{X},\boldsymbol{Y}) \} = I(\boldsymbol{X_1},\boldsymbol{Y_1})$. Then define the input $\boldsymbol{X_2}$ with output $\boldsymbol{Y_2}$ to be distributed such that $\mathcal{P}_{\boldsymbol{X_2}}$ is the UGR distribution with $\mathcal{R}_{\boldsymbol{X_2}}=\mathcal{R}_{\boldsymbol{X_1}}$.   To prove the theorem it suffices to show $I(\boldsymbol{X_2},\boldsymbol{Y_2}) \geq I(\boldsymbol{X_1},\boldsymbol{Y_1})$.

By Lemma~\ref{RYdef}, $\mathcal{R}_{\boldsymbol{Y_2}} = \mathcal{R}_{\boldsymbol{Y_1}}$ and by Lemma~\ref{YUGR}, $\boldsymbol{Y_2}$ has a UGR distribution. Therefore, by Lemma~\ref{H(Y2)geqH(Y1)},
\begin{equation} \label{eqH(Y2)geqH(Y1)}
H(\boldsymbol{Y_2}) \geq H(\boldsymbol{Y_1}). 
\end{equation}
Also, since $\mathcal{R}_{\boldsymbol{X_2}}=\mathcal{R}_{\boldsymbol{X_1}}$, Lemma~\ref{H(Y|X)onlyRX} implies that
\begin{equation} \label{eqH(Y1|X1)=H(Y2|X2)}
H(\boldsymbol{Y_2}|\boldsymbol{X_2}) = H(\boldsymbol{Y_1}|\boldsymbol{X_1}).
\end{equation} 
Using \eqref{eqH(Y2)geqH(Y1)} and \eqref{eqH(Y1|X1)=H(Y2|X2)}, it follows that
\begin{align*}
I(\boldsymbol{X_2},\boldsymbol{Y_2}) &= H(\boldsymbol{Y_2}) - H(\boldsymbol{Y_2} | \boldsymbol{X_2})
\\
&\geq H(\boldsymbol{Y_1}) - H(\boldsymbol{Y_2} | \boldsymbol{X_2})
\\
&= H(\boldsymbol{Y_1}) - H(\boldsymbol{Y_1} | \boldsymbol{X_1})
\\
&= I(\boldsymbol{X_1},\boldsymbol{Y_1}). \qedhere
\end{align*}
\end{proof}



\section{Optimal input distributions and channel capacity} \label{sec:Gamma_capacity}

Theorem~\ref{capacityUGR} reduces the problem of computing the Gamma channel capacity to a maximisation over a set of variables of linear rather than exponential size, since the UGR distribution is determined by the distribution $\mathcal{R}_{\boldsymbol{X}}$ on a set of size $\min\{n,m\}+1$. In this section we give an expression for this maximisation problem in terms of the channel parameters and the efficiently computable functions $f_0$, $f_1$ and $f_2$ defined in Section~\ref{sec:matrices}. Since the mutual information is concave when considered as a function over possible input distributions (see e.g.~\cite[Theorem 2.7.4]{cover2012elements}), this is a concave maximisation problem and hence efficiently computable (see e.g.~\cite{boyd2004convex}).
Therefore the expression obtained provides a means for efficiently computing the exact channel capacity, and determining an optimal input rank distribution.

We begin by computing the value of $h_r$, as defined in Definition~\ref{def:hr}. This is needed to compute the maximisation problem in Corollary~\ref{capacity_corollary} that gives rise to the channel capacity.

\begin{theorem} \label{thm:h_r=}
The value $h_r$, as defined in Definition~\ref{def:hr}, is given by
\begin{multline*}
h_r
= \sum_{v=0}^{\min \{n,m\}} \sum_{h=0}^{\min\{r,v\}} 
q^{(v-h)(r-h)}
\qbinom{r}{h}
\qbinom{m-r}{v-h}
\\
\cdot
\left( \sum_{r_B=0}^{\min\{n,m\}}
\mathcal{R}(r_B) 
\frac{f_1(r,v,h;r_B)}{|\mathbb{F}_q^{n \times m, r_B}|} \right) 
 \log \left( \frac{f_0(v)}{\sum_{r_B=h}^{\min\{n,m, v+h\}}
\mathcal{R}(r_B) 
\frac{f_1(r,v,h;r_B)}{|\mathbb{F}_q^{n \times m, r_B}|} } \right). 
\end{multline*}
where $f_0$ is as defined in Lemma~\ref{f_0_lemma} and $f_1$ is as defined in Lemma~\ref{f_1_lemma}.
\end{theorem}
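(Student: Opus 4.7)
Since $h_r$ is independent of the specific choice of rank-$r$ matrix $M$ by Lemma~\ref{H(Y)=H(Y')}, I would fix any convenient $M$ and set $U=\row(M)$, a subspace of $\bF_q^m$ of dimension $r$. The plan is to compute $H(\boldsymbol{Y})$ for $\boldsymbol{Y}=\boldsymbol{A}(M+\boldsymbol{B})$ by grouping together outputs with the same row space, and then parameterising row spaces by a suitable pair of invariants.

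The first step is to observe that $\boldsymbol{Y}$ is uniformly distributed given its row space. Since $\boldsymbol{A}\in\GL(n,q)$, we have $\row(\boldsymbol{Y})=\row(M+\boldsymbol{B})$, and for any fixed matrix $C$ the matrix $\boldsymbol{A}C$ is uniform on the set $\{C'\in\bF_q^{n\times m}:\row(C')=\row(C)\}$. Consequently
\[
P(\boldsymbol{Y}=Y)\;=\;\frac{P(\row(\boldsymbol{Y})=\row(Y))}{f_0(\dim\row(Y))},
\]
and, grouping outputs $Y$ by their row space,
\[
h_r\;=\;\sum_{V}P(\row(\boldsymbol{Y})=V)\,\log\!\bigl(f_0(\dim V)\big/P(\row(\boldsymbol{Y})=V)\bigr),
\]
where $V$ runs over subspaces of $\bF_q^m$ of dimension at most $\min\{n,m\}$.

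The second step is to evaluate $P(\row(\boldsymbol{Y})=V)=P(\row(M+\boldsymbol{B})=V)$. Conditioning on $\rank(\boldsymbol{B})$ and applying Lemma~\ref{f_1_lemma} yields
\[
P(\row(\boldsymbol{Y})=V)\;=\;\sum_{r_B}\mathcal{R}(r_B)\,\frac{f_1(r,v,h;r_B)}{|\bF_q^{n\times m,r_B}|},
\]
with $v=\dim V$ and $h=\dim(U\cap V)$. The key point is that this probability depends on $V$ only through the pair $(v,h)$. The range of $r_B$ for which $f_1(r,v,h;r_B)$ can be non-zero is determined by bounding $\rank(\boldsymbol{B})$ via the images of $\row(\boldsymbol{B})$ in the quotient spaces $\bF_q^m/V$ and $\bF_q^m/U$, together with the inequality $\rank(\boldsymbol{B})\le\dim(U+V)$; these bounds match those appearing in the inner sum of the theorem.

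The third step is to group the single sum over $V$ by the pair $(v,h)$, using Corollary~\ref{cor:count_spaces_fixed_dim_int} with ambient space $\bF_q^m$ and distinguished subspace $V_1=U$ to count the $v$-dimensional $V\subseteq\bF_q^m$ with $\dim(U\cap V)=h$. Substituting this count, together with the expression for the conditional probability, into the single sum for $h_r$ produces the double sum in the theorem statement.

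The main obstacle is not conceptual but bookkeeping: the subspace-counting formula, the outer summation limits, the inner-sum bounds on $r_B$, and the Gaussian binomial conventions (which automatically zero out contributions from $(v,h)$-pairs for which no such $V$ exists) all have to be aligned with the particular reindexed form presented in the statement.
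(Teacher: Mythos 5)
Your proposal mirrors the paper's proof essentially step for step: decompose $H(\boldsymbol{Y})$ by the row space of $\boldsymbol{Y}$ (the paper reaches the same formula via the chain rule $H(\boldsymbol{Y})=H(\boldsymbol{Y}\mid\row(\boldsymbol{Y}))+H(\row(\boldsymbol{Y}))$), express $\Prob(\row(\boldsymbol{Y})=V)$ by conditioning on $\rank(\boldsymbol{B})$ and invoking $f_1$, and regroup the sum over $V$ by the two numerical invariants using Corollary~\ref{cor:count_spaces_fixed_dim_int}. One caution on the bookkeeping you flag at the end: the paper's proof of this theorem (consistent with Theorem~\ref{thm:row(M+B)=Ucount}, but not with the $h=\dim(U\cap V)$ wording in Lemma~\ref{f_1_lemma}) takes the second invariant to be $h=\dim\big((\row(M)+V)/V\big)=r-\dim(U\cap V)$, and it is this quotient convention that makes the factor $q^{(v-r+h)h}\qbinom{m-r}{v-r+h}\qbinom{r}{r-h}$ and the inner-sum range $h\le r_B\le\min\{n,m,v+h\}$ emerge directly from Corollary~\ref{cor:count_spaces_fixed_dim_int} and the vanishing conditions for $f_1$; with your $h=\dim(U\cap V)$ you would need the reindexing $h\mapsto r-h$ throughout to land on the stated form.
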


\begin{proof}
Let $M$ be a fixed $n \times m$ matrix of rank $r$. 
Let $\bs{Y}=\bs{A}(M+\bs{B})$,
 where $\bs{A}$ is picked uniformly from $\GL(n,q)$ and $\bs{B}$ has a UGR distribution with rank distribution $\mathcal{R}$. Then 
\[
h_r = H(\bs{A}(M+\bs{B}) | \rank(M)=r) = H(\bs{Y}).
\]
Since $\row(\bs{Y})$ is fully determined by $\bs{Y}$, it follows that $H(\bs{Y}, \row(\bs{Y}))=H(\bs{Y})$. Therefore, using the chain rule for entropy (e.g.~\cite[Thm. 2.2.1]{cover2012elements}), we have
\begin{align}
H(\bs{Y}) &= H(\bs{Y}, \row(\bs{Y})) \notag
\\
&= H(\bs{Y} | \row(\bs{Y}) ) +H(\row(\bs{Y})) \label{chainrulerowY}.
\end{align}
Now, multiplying $(M+\bs{B})$ by a uniformly picked invertible matrix will result in a uniform matrix of the same rowspace as $(M+\bs{B})$. That is, the distribution of $\bs{Y}$ is uniform given the rowspace of $\bs{Y}$. Thus (see~\cite[Thm. 2.6.4]{cover2012elements})
\begin{align}
H(\bs{Y}|\row(\bs{Y})=V)  &=  \log \left(\lvert\{Y': Y' \in \mathbb{F}_q^{n \times m}, \row(Y')=V\}\rvert \right) \notag
\\
&= \log \left( f_0(\dim(V)) \right) \label{eq:H(Y|row(Y)=U)},
\end{align} 
where $f_0$ is as defined in Lemma~\ref{f_0_lemma}.
Therefore
\begin{align}
H(\bs{Y} | \row(\bs{Y}) ) 
&= \sum_{V \subseteq \mathbb{F}_q^m} \Prob(\row(\bs{Y})=V)  H(\bs{Y}|\row(\bs{Y})=V) \notag
\\
&= \sum_{V \subseteq \mathbb{F}_q^m} \Prob(\row(\bs{Y})=V)  \log \left( f_0(\dim(V)) \right).
\end{align}
Hence
\begin{align}
h_r &= H(\bs{Y}) \notag
\\
&=H(\bs{Y} | \row(\bs{Y}) ) +H(\row(\bs{Y})) \notag
\\
&= \sum_{V \subseteq \mathbb{F}_q^m} \Prob(\row(\bs{Y})=V)  \log \left( f_0(\dim(V)) \right) \notag
\\
& \hspace{2.5cm}
- \sum_{V \subseteq \mathbb{F}_q^m} \Prob(\row(\bs{Y})=V)  \log \left( \Prob(\row(\bs{Y})=V)  \right) \notag
\\
&= \sum_{V \subseteq \mathbb{F}_q^m} \Prob(\row(\bs{Y})=V)  \log \left( \frac{f_0(\dim(V))}{\Prob(\row(\bs{Y})=V)} \right) \label{eq:hr_Prob(row(Y)=V)}
.
\end{align}
%
%
%
%
%
%
%
%
Now, we calculate the probability of $\bs{Y}$ having a particular rowspace $V$. Set $U=\row(M)$, so $\dim U=r$. For $V \subseteq \bF_q^m$, let $d_{UV} = \dim(U\cap V)$. Using the function $f_1$ defined in Lemma~\ref{f_1_lemma}, we obtain the following result. 
\begin{align}
\Prob(\row&(\bs{Y})=V) \notag
\\
&= \Prob(\row(M+\bs{B})=V) \notag
\\
&= 
\sum_{r_B=0}^{\min\{n,m\}} \Prob(\rank(\bs{B})=r_B)  \Prob(\row(M+\bs{B})=V | \rank(\bs{B})=r_B ) \notag
\\
&=
\sum_{r_B=0}^{\min\{n,m\}}
\mathcal{R}(r_B) 
\frac{|\{B: \rank(B)=r_B, \row(M+B)=V \}|}{|\mathbb{F}_q^{n \times m, r_B}|} \label{eq:probUpf1}
\\
&=
\sum_{r_B=0}^{\min\{n,m\}}
\mathcal{R}(r_B) 
\frac{f_1(r,\dim(V),d_{UV};r_B)}{|\mathbb{F}_q^{n \times m, r_B}|} 
\label{eq:probUpf3},
\end{align}
where \eqref{eq:probUpf1} follows since $\bs{B}$ has a UGR distribution.

Substituting \eqref{eq:probUpf3} into \eqref{eq:hr_Prob(row(Y)=V)} we get
\begin{multline} \label{eq:h_r_sum_over_V}
h_r
= \sum_{V \subseteq \mathbb{F}_q^m}
\left( \sum_{r_B=0}^{\min\{n,m\}}
\mathcal{R}(r_B) 
\frac{f_1(r,\dim(V),d_{UV};r_B)}{|\mathbb{F}_q^{n \times m, r_B}|} \right) 
\\
\cdot \log \left( \frac{f_0(\dim(V))}{\sum_{r_B=0}^{\min\{n,m\}}
\mathcal{R}(r_B) 
\frac{f_1(r,\dim(V),d_{UV};r_B)}{|\mathbb{F}_q^{n \times m, r_B}|} } \right)
\end{multline}
%
%
%
%
%
In \eqref{eq:h_r_sum_over_V}, for a given subspace $V \subseteq \bF_q^m$, the corresponding term in the sum depends only on $\dim(V)$ and $d_{UV}=\dim(\row(M)\cap V)$. Clearly $0\leq d_{UV}\leq\min\{\dim U, \dim V\}$. Moreover, Corollary~\ref{cor:count_spaces_fixed_dim_int} implies that the number of spaces $V$ with $\dim(V)=v$ and $\dim(\row(M)\cap V)=h$ for fixed integers $v$ and $h$ is
\[
q^{(v-h)(r-h)}
\qbinom{r}{h}
\qbinom{m-r}{v-h}.
\]
Combining this with \eqref{eq:h_r_sum_over_V} proves the theorem.
\end{proof}


Now we give the result of this section: an efficiently computable expression for the Gamma channel capacity as a maximisation over the set of possible input rank distributions. 

\begin{corollary} \label{capacity_corollary}
The capacity of the Gamma channel $\Gamma(\mathcal{R})$ is given by
\[
C=
\max_{\mathcal{R}_X} \left\{
\left(
\sum_{r_Y=0}^{\min \{ n, m \} }
\mathcal{R}_{\boldsymbol{Y}}(r_Y) 
\log
\left(
\frac{| \mathbb{F}_q^{n \times m,r_Y} |}{\mathcal{R}_{\boldsymbol{Y}}(r_Y)}  
 \right)
 \right)
-\sum_{r_X=0}^{\min\{n,m\}} 
\mathcal{R}_{\boldsymbol{X}}(r_X)h_{r_X}\right\},
\]
where $h_{r_X}$ may be computed using Theorem~\ref{thm:h_r=}, and $\mathcal{R}_{\boldsymbol{Y}}(r_Y)$ may be computed using Lemma~\ref{RYdef}.
\end{corollary}
\begin{proof}
The capacity $C$ of the channel is defined to be
\begin{equation}
\label{eqn:Cdef}
C= \max_{\mathcal{P}_{\boldsymbol{X}}} \{ I(\boldsymbol{X};\boldsymbol{Y}) \} = \max_{\mathcal{P}_{\boldsymbol{X}}} \{ H(\boldsymbol{Y}) - H(\boldsymbol{Y}|\boldsymbol{X})\}.
\end{equation}

 By Theorem~\ref{capacityUGR}, to achieve capacity we can chose the input distribution $\mathcal{P}_{\boldsymbol{X}}$ to be UGR. By Lemma~\ref{YUGR}, the output distribution will also be UGR. Therefore the output distribution is given by
\begin{equation}
\mathcal{P}_{\boldsymbol{Y}}(Y) = \Prob(\boldsymbol{Y}=Y) = \frac{1}{| \mathbb{F}_q^{n \times m,\rank(Y)} |}  \mathcal{R}_{\boldsymbol{Y}}(\rank(Y))  \label{Py}
\end{equation}
for any $Y \in \mathbb{F}_q^{n \times m}$. Thus the entropy of $\boldsymbol{Y}$ is given by
\begin{align*}
H(\boldsymbol{Y}) 
&= - \sum_{Y \in \mathbb{F}_q^{n \times m}} \Prob(\boldsymbol{Y}=Y)  \log \Prob(\boldsymbol{Y}=Y) \notag
\\
&=  - \sum_{Y \in \mathbb{F}_q^{n \times m}} 
\left(
\frac{1}{| \mathbb{F}_q^{n \times m,\rank(Y)} |}  \mathcal{R}_{\boldsymbol{Y}}(\rank(Y)) 
\right )
\log
\left(
\frac{1}{| \mathbb{F}_q^{n \times m,\rank(Y)} |}  \mathcal{R}_{\boldsymbol{Y}}(\rank(Y)) 
\right) \notag
\\
&= - \sum_{r_Y =0}^{\min\{n,m\}}
\mathcal{R}_{\boldsymbol{Y}}(r_Y) 
\log
\left(
\frac{1}{| \mathbb{F}_q^{n \times m,r_Y} |}  \mathcal{R}_{\boldsymbol{Y}}(r_Y) 
\right) .
\end{align*}
Since $H(\boldsymbol{Y}|\boldsymbol{X})= \sum_{r_X=0}^{\min\{n,m\}} 
\mathcal{R}_{\boldsymbol{X}}(r_X)h_{r_X}$ by Lemma~\ref{H(Y|X)onlyRX},
the result follows from~\eqref{eqn:Cdef}. 
\end{proof}



\section{Matrix function proofs} \label{sec:matrices_proofs}
The aim of this section is to derive efficiently computable expressions for the functions $f_1$ and $f_2$, thus proving Lemmas~\ref{f_1_lemma} and~\ref{f_2_lemma} respectively and providing a method for computing the capacity formula given in Corollary~\ref{capacity_corollary}.

We approach this problem by first exploring several combinatorial results. In Subsection~\ref{sec:mobius} we establish a counting result we need later, using M\"obius theory. In Subsection~\ref{sec:MF_counting_subspaces}, we use this result to derive expressions for the functions $f_1$ and $f_2$.

\subsection{A counting lemma} \label{sec:mobius}

In this subsection, we prove an `inversion' lemma, Lemma~\ref{lemma:mob_inv_prod}, that we need in the following subsection. We use M\"obius theory (a generalisation of inclusion--exclusion) to establish this lemma: see Bender and Goldman~\cite{Bender1975}, for example, for a nice introduction to this theory and an exposition of all the results we use here.

Let $\text{Po}(\bF_q^m)$ denote the poset of all subspaces of $\bF_q^m$ ordered by containment.
Let $P$ and $Q$ be two posets. Recall that the direct product $P \times Q$ is the poset where $(p_1,q_1) \leq (p_2,q_2)$ if and only if $p_1 \leq p_2$ and $q_1 \leq q_2$, where $p_1, p_2 \in P$ and $q_1, q_2 \in Q$.  

\begin{lemma} \label{lemma:mob_inv_prod}
Let $f((U,V))$ be a real valued function defined for all pairs $(U,V) \in \mathrm{Po}(\bF_q^m) \times \mathrm{Po}(\bF_q^m)$. If
\[
g((U,V)) = \sum_{(U',V') \leq (U,V)} f((U',V'))
\]
then
\[
f((U,V)) =  \sum_{(U',V') \leq (U,V)}  (-1)^{u-u'+v-v'}  q^{ \binom{u-u'}{2}+ \binom{v-v'}{2}}  g((U',V')),
\]
where $\dim(U)=u, \dim(U')=u', \dim(V)=v$ and $\dim(V')=v'$.
\end{lemma}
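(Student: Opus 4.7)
The plan is to apply the general Möbius inversion formula on the product poset $\text{Po}(\bF_q^m) \times \text{Po}(\bF_q^m)$, using the preceding two lemmas to identify the relevant Möbius function explicitly.

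First I would invoke Lemma~\ref{lemma:mob_func_prod}, which tells us that the Möbius function $\mu_{P\times P}$ of the direct product factors as $\mu_{P\times P}((U',V'),(U,V)) = \mu_P(U',U)\,\mu_P(V',V)$, where $P = \text{Po}(\bF_q^m)$. Then I would substitute the formula~\eqref{eq:mobius_function_subspaces} for $\mu_P$ on each factor, giving
\[
\mu_{P\times P}((U',V'),(U,V)) = (-1)^{u-u'}q^{\binom{u-u'}{2}} \cdot (-1)^{v-v'}q^{\binom{v-v'}{2}}.
\]
Combining signs and exponents yields $(-1)^{u-u'+v-v'}\,q^{\binom{u-u'}{2}+\binom{v-v'}{2}}$, which is exactly the coefficient appearing in the statement.

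Next I would apply the general M\"obius inversion formula (as cited in~\cite[Theorem~1]{Bender1975}, used already in the proof of Lemma~\ref{lemma:mob_inv_subspace}) to the poset $P \times P$: given the hypothesis
\[
g((U,V)) = \sum_{(U',V')\leq (U,V)} f((U',V')),
\]
the inversion formula immediately gives
\[
f((U,V)) = \sum_{(U',V')\leq (U,V)} \mu_{P\times P}((U',V'),(U,V))\, g((U',V')).
\]
Substituting the factored expression for $\mu_{P\times P}$ obtained above produces the claimed formula.

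Since this is a direct assembly of two previously established facts (the product rule for Möbius functions and the inversion formula on a single copy of $\text{Po}(\bF_q^m)$), I do not anticipate any real obstacle; the only thing to be mildly careful about is correctly identifying the order on the product poset (coordinatewise containment) so that the sum ranges over pairs $(U',V')$ with $U'\subseteq U$ and $V'\subseteq V$ simultaneously, and then bookkeeping of the two binomial exponents when the two factors of the Möbius function are multiplied.
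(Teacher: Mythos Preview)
Your proposal is correct and mirrors the paper's own proof: both apply the general M\"obius inversion formula on the product poset and then substitute the explicit product M\"obius function obtained from Lemma~\ref{lemma:mob_func_prod} together with~\eqref{eq:mobius_function_subspaces}. The only difference is the order of presentation (you compute $\mu_{P\times P}$ first and then invert, whereas the paper inverts first and then substitutes), which is immaterial.
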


\begin{proof}
By the M\"obius inversion formula (see~\cite[Theorem~1]{Bender1975}, for example) 
\begin{equation} \label{pf:mob_inv_prod1}
f((U,V)) =  \sum_{(U',V') \leq (U,V)} \mu ((U',V'),(U,V))  g((U',V')),
\end{equation}
where $\mu$ is the M\"obius function of $\text{Po}(\bF_q^m) \times \text{Po}(\bF_q^m)$. But (see~\cite[Theorem~3]{Bender1975}, for example),
\[
\mu ((U',V'),(U,V))=\mu'(U',U)\mu'(V',V)
\]
where $\mu'$ is the M\"obius function of $\text{Po}(\bF_q^m)$. Moreover (see, for example, \cite[\S 5]{Bender1975}) the M\"obius function of $\text{Po}(\bF_q^m)$ may be written explicitly as
\[
\mu'(X,Y) = (-1)^{\dim(Y)-\dim(X)}  q^{\binom{\dim(Y)-\dim(X)}{2}}
\]
for any $X,Y\in \text{Po}(\bF_q^m)$ with $X\subseteq Y$.
So the lemma follows.
\end{proof}

\subsection{Computing $f_1$ and $f_2$} \label{sec:MF_counting_subspaces}

By `basic dimension properties', we mean that all specified dimensions are non-negative integers, and if dimensions $d_U$ and $d_V$ of subspaces $U\subseteq V$ are specified, then $d_U\leq d_V$.

\begin{lemma}
\label{lemma:direct_sum}
Let $z$ be a non-negative integer. Let $X$ and $Y$ be subspaces of $\bF_q^z$ of dimensions $d_X$ and $d_Y$ respectively such that $X\cap Y=\{0\}$. Let $c(d_X,d_Y,z,d_R,d_{RX},d_{RY})$ be the number of $d_R$-dimensional subspaces $R$ of $\bF_q^z$ such that $\dim(R\cap X)=d_{RX}$, $\dim(R\cap Y)=d_{RY}$ and such that $X\subseteq Y+R$. If the basic dimension properties are satisfied then $c(d_X,d_Y,z,d_R,d_{RX},d_{RY})$ is given by the formula
\[
\qbinom{d_X}{d_{RX}}\qbinom{d_Y}{d_{RY}}\qbinom{z-d_X-d_Y}{d_R-d_{RY}-d_X}q^{(d_Y-d_{RY})(d_R-d_X-d_{RY})}\prod_{i=0}^{d_X-d_{RX}-1}(q^{d_Y-d_{RY}}-q^i),
\]
otherwise $c(d_X,d_Y,z,d_R,d_{RX},d_{RY})=0$.
\end{lemma}
We remark that, in this case, the basic dimension properties are that $0\leq d_{RX}\leq \min\{d_R,d_X\}$, $0\leq d_{RY}\leq \min\{d_R,d_Y\}$ and $d_X+d_Y\leq z$. 
\begin{proof}
Suppose that $d_X> d_R-d_{RY}$.
The condition that $X\subseteq Y+R$ is equivalent to the condition that the subspace $(X+Y)/Y$ is contained in the subspace $(R+Y)/Y$. The dimensions of these subspaces are $d_X$ and $d_R-d_{RY}$ respectively, and so our count is zero in this case. But $\qbinom{z-d_X-d_Y}{d_R-d_{RY}-d_X}=0$ when $d_R-d_{RY}-d_X<0$ and so the lemma follows in this case. So we may assume that $d_X\leq d_R-d_{RY}$.

There are $\qbinom{d_X}{d_{RX}}$ choices for the subspace $R\cap X$ and $\qbinom{d_Y}{d_{RY}}$ choices for the subspace $R\cap Y$. Assume that these subspaces are fixed. The quotient $Q=(R+Y)/Y$ of $R$ in $\bF_{q}^z/Y$ has dimension $(d_R-d_{RY})$. The condition that $X\subseteq Y+R$ implies that $(X+Y)/Y\subseteq Q$. Since $(X+Y)/Y$ has dimension $d_X$, the number of choices for $Q$ is therefore $\qbinom{z-d_Y-d_X}{d_R-d_{RY}-d_X}$. Assume that $Q$ is now also fixed.

Fix $u_1,u_2,\ldots,u_{d_R-d_{RY}-d_X}\in\bF_q^m$  with the property that $\{u_i+Y:1\leq i\leq d_R-d_{RY}-d_X \}$ spans a complement to $(X+Y)/Y$ in $Q$. Fix a basis $x_1,\dots, x_{d_{RX}}$ of $R\cap X$, and extend this basis to a basis $x_1,x_2,\ldots,x_{d_X}$ of $X$. Fix a basis $y_1,y_2,\ldots,y_{d_{RY}}$ of $R\cap Y$. Every subspace $R$ we are counting has a basis of the form
\begin{multline*}
\{y_i:1\leq i\leq d_{RY}\}\cup \{x_i:1\leq i\leq d_{RX}\}\cup \{x_i+\epsilon_i:d_{RX}+1\leq i\leq d_{X}\}\\
\cup\{u_i+\delta_i:1\leq i\leq d_{R}-d_{RY}-d_X\}
\end{multline*}
for some $\epsilon_i,\delta_i\in Y$. Note that all subspaces with a basis of this form intersect $Y$ in precisely the space spanned by $\{y_i:1\leq i\leq d_{RY}\}$, and all subspaces are equal to $Q$ after taking a quotient by $Y$. Moreover, a subspace of this form intersects $X$ in precisely the subspace spanned by $\{x_i:1\leq i\leq d_{RX}\}$ if and only if the vectors $\epsilon_i+(R\cap Y)$ are linearly independent in $Y/(R\cap Y)$. Finally, two subspaces of this form are distinct if and only if the ordered set of vectors $\epsilon_i$ and $\delta_i$ are different modulo $R\cap Y$. There are $q^{(d_Y-d_{RY})(d_R-d_X-d_{RY})}$ choices for vectors $\delta_i+(R\cap Y)\in Y/(R\cap Y)$, and there are $\prod_{i=0}^{d_X-d_{RX}-1}(q^{d_Y-d_{RY}}-q^i)$ choices for linearly independent vectors $\epsilon_i+(R\cap Y)\in Y/(R\cap Y)$. So the lemma follows.
\end{proof}

We define a function $f_1(d_U,d_V,d_{UV}; r)$ as follows. When $r<(d_U-d_{UV})$, we define $f_1(d_U,d_V,d_{UV}; r)=0$. Otherwise we proceed as follows. For integers $d_{V'}$, $d_{W'}$ and $d_{V'W'}$, define
\[
\kappa_1(d_{V'},d_{W'},d_{V'W'})=(-1)^{(r-d_{W'})} q^{\binom{r-d_{W'}}{2}}(-1)^{(d_V-d_{V'})} q^{\binom{d_V-d_{V'}}{2}}q^{n d_{V'W'}}.
\]
For integers $d_{V'}$, $d_{W'}$, $d_{V'W'}$, $d_{UW'}$, $d_{VW'}$ and $d_{UVW'}$, define
\[
\kappa_2(d_{V'}, d_{W'}, d_{V'W'}, d_{UW'}, d_{VW'},d_{UVW'})=\nu_1\nu_2\nu_3
\]
where
\begin{align*}
\nu_1&=c(d_U-d_{UV},d_V-d_{UV},m-d_{UV},d_{W'}-d_{UVW'},d_{UW'}-d_{UVW'},d_{VW'}-d_{UVW'})\\
\nu_2&=\qbinom{d_{UV}}{d_{UVW'}}\qbinom{d_{VW'}}{d_{V'W'}}\qbinom{(d_V-d_{VW'})-(d_U-d_{UW'})}{(d_{V'}-d_{V'W'})-(d_U-d_{UW'})}\qbinom{m-d_{W'}}{r-d_{W'}}\text{ and}\\
\nu_3&=q^{(d_{W'}-d_{UVW'})(d_{UV}-d_{UVW'})}q^{(d_{VW'}-d_{V'W'})(d_{V'}-d_{V'W'})},
\end{align*}
where $c$ is the function defined in Lemma~\ref{lemma:direct_sum}. Then $f_1(d_U,d_V,d_{UV}; r)$ is equal to
\begin{equation*} \label{eq:f1}
\sum_{d_{W'}=0}^r\sum_{d_{V'}=0}^{d_V}\sum_{d_{V'W'}=0}^{\min\{d_{V'},d_{W'}\}}\sum_{d_{UW'}=0}^{\min\{d_U,d_{W'}\}}
\sum_{d_{VW'}=0}^{\min\{d_V,d_{W'}\}}
\sum_{d_{UVW'}=0}^{\min\{d_{UV},d_{VW'},d_{UW'}\}}\kappa_1\kappa_2,
\end{equation*}
where $\kappa_1=\kappa_1(d_{V'},d_{W'},d_{V'W'})$ and $\kappa_2=\kappa_2(d_{V'}, d_{W'}, d_{V'W'}, d_{UW'}, d_{VW'},d_{UVW'})$.

\begin{theorem} \label{thm:row(M+B)=Ucount}
Let $f_1$ be as defined in Lemma~\ref{f_1_lemma}. That is, if $U$ and $V$ are subspaces of $\bF_q^m$ of dimensions $d_U$ and $d_V$ respectively, with $d_{UV}= \dim(U\cap V)$ and $M\in\bF_q^{n\times m}$ is a fixed matrix such that $\row(M)=U$; then $f_1(d_U,d_V,d_{UV}; r)$ gives the number of matrices $B\in\bF_q^{n\times m,r}$ such that $\row(M+B)=V$. Then the value $f_1(d_U,d_V,d_{UV}; r)$ is as given above. 
\end{theorem}
\begin{proof}
We begin the proof with a simpler counting problem, and then use this result to establish the formula we are aiming for.

For a subspace $W$ of $\bF_q^m$, let $g(V,W)$ be the number of $n\times m$ matrices $B$ with $\row(B)\subseteq W$ and $\row(M+B)\subseteq V$. We claim that
\[
g(V,W)=\begin{cases}
q^{n\,d_{VW}}&\text{ if }U\subseteq V+W\\
0&\text{ otherwise.}
\end{cases}
\]
To see this, we proceed as follows. Let $x_1,x_2,\ldots,x_n\in\bF_q^m$ be the rows of $M$. Suppose that $U\not\subseteq V+W$. Then $(x_i+W)\cap V=\emptyset$ for some $i$, and so we must have $g(V,W)=0$, since there is no valid choice for the $i$th row of $B$ in this case. Now suppose that  $U\subseteq V+W$, so for all $i$ we have $(x_i+W)\cap V\not=\emptyset$ and therefore there exist $w_1,w_2,\ldots ,w_n\in W$ such that $x_i+w_i\in V$. It is not hard to check that a matrix $B$ with rows $b_i$ has the property that $\row(B)\subseteq W$ and $\row(M+B)\subseteq V$ if and only if $b_i-w_i\in V\cap W$. Hence there are $q^{d_{VW}}$ choices for each row $b_i$ of $B$. Since $B$ has $n$ rows, the claim follows.

Let $f(V,W)$ be the number of $n\times m$ matrices $B$ with $\row(B)= W$ and $\row(M+B)=V$. Now $g(V,W)=\sum_{(V',W')} f(V,W)$, where the sum is over all pairs of subspaces $(V',W')$ with $V'\subseteq V$ and $W'\subseteq W$. So, by Lemma~\ref{lemma:mob_inv_prod},
\begin{align*}
f(V,W)&=\sum_{(V',W')} (-1)^{(d_W-d_{W'})+(d_V-d_{V'})}q^{\binom{d_W-d_{W'}}{2}+\binom{d_V-d_{V'}}{2}}g(V',W')\\
&=\sum_{\substack{(V',W')\\U\subseteq V'+W'}} (-1)^{(d_W-d_{W'})+(d_V-d_{V'})}q^{\binom{d_W-d_{W'}}{2}+\binom{d_V-d_{V'}}{2}}q^{nd_{V'W'}}\\
&=\sum_{\substack{(V',W')\\U\subseteq V'+W'}} \kappa_1(d_{V'},d_{W'},d_{V'W'}),
\end{align*}
where again $V'\subseteq V$ and $W'\subseteq W$ in our sums.

The number of matrices $B$ of rank $r$ such that $\row(M+B)=V$ is
\[
\sum_{\substack{W\subseteq \bF_q^m\\ \dim W=r}}f(V,W).
\]
So we can express this count as
\begin{equation}
\label{eqn:sum_expression}
\sum_{d_{W'}=0}^r\sum_{d_{V'}=0}^{d_V}\sum_{d_{V'W'}=0}^{\min\{d_{V'},d_{W'}\}}\sum_{d_{UW'}=0}^{\min\{d_U,d_{W'}\}}
\sum_{d_{VW'}=0}^{\min\{d_V,d_{W'}\}}
\sum_{d_{UVW'}=0}^{\substack{\min\{d_{UV},d_{VW'},\\d_{UW'}\}}}\sum_{V',W',W}\kappa_1(d_{V'},d_{W'},d_{V'W'}),
\end{equation}
where the last sum is over all triples $(V',W',W)$ of subspaces of $\bF_q^m$ with $V'\subseteq V$, $W'\subseteq W$, $U\subseteq V'+W'$, $\dim(W')=d_{W'}$, $\dim(W)=r$, $\dim(V')=d_{V'}$, $\dim(V'\cap W')=d_{V'W'}$, $\dim(U\cap W')=d_{UW'}$, $\dim(V\cap W')=d_{VW'}$ and $\dim{U\cap V\cap W'}=d_{UVW'}$.

We aim to count the number of possibilities for a subspace $W'$ such that $\dim(W')=d_{W'}$, $\dim(U\cap W')=d_{UW'}$, $\dim(V\cap W')=d_{VW'}$ and $\dim{U\cap V\cap W'}=d_{UVW'}$ and that satisfy the weaker condition that $U\subseteq V+W'$. We will show (see below) that the number of such subspaces $W'$ is
\begin{equation}
\label{eqn:w_dash_count}
c\qbinom{d_{UV}}{d_{UVW'}}q^{(d_{W'}-d_{UVW'})(d_{UV}-d_{UVW'})},
\end{equation}
where $c=c(d_U-d_{UV},d_V-d_{UV},m-d_{UV},d_{W'}-d_{UVW'},d_{UW'}-d_{UVW'},d_{VW'}-d_{UVW'})$ is defined in Lemma~\ref{lemma:direct_sum}.

Once we have fixed such a subspace $W'$, we choose $V'$ and $W$ as follows. We first choose the subspace $V'\cap W'$. There are $\qbinom{d_{VW'}}{d_{V'W'}}$ choices for this subspace. The quotient space $(V'+W')/W'$ of $V'$ by $W'$ is a space of dimension $d_{V'}-d_{V'W'}$. It is contained in the $(d_V-d_{VW'})$-dimensional space  $(V+W')/W'$ and contains the $(d_{U}-d_{UW'})$-dimensional space $(U+W')/W'$. So the number of choices for $(V'+W')/W'$ is
\[
\qbinom{(d_V-d_{VW'})-(d_U-d_{UW'})}{(d_{V'}-d_{V'W'})-(d_U-d_{UW'})}.
\]
Once this quotient space is also fixed, there are $q^{(d_{VW'}-d_{V'W'})(d_{V'}-d_{V'W'})}$ choices for $V'$.
Finally we choose the $r$-dimensional subspace $W$ containing $W'$: there are $\qbinom{m-d_{W'}}{r-d_{W'}}$ choices for $W$.

Combining the formula~\eqref{eqn:w_dash_count} with \eqref{eqn:sum_expression} and the counting argument of the previous paragraph, the theorem follows. So it remains to establish~\eqref{eqn:w_dash_count}.

The number of choices~\eqref{eqn:w_dash_count} for $W'$ may be found as follows. There are $\qbinom{d_{UV}}{d_{UVW'}}$ choices for the subspace $T=(U\cap V)\cap W'$. Suppose that $T$ is now fixed. We now consider the images $X$, $Y$ and $R$ of $U$, $V$ and $W'$ respectively in the quotient by $U\cap V$. So $X=(U+(U\cap V))/(U\cap V)$ has dimension $d_U-d_{UV}$ and $Y=(V+(U\cap V))/(U\cap V)$ has dimension $d_V-d_{UV}$. Moreover $R$ is a subspace of dimension $d_{W'}-d_{UVW'}$ which intersects $X$ and $Y$ in subspaces of dimension $d_{UW'}-d_{UVW'}$ and $d_{VW'}-d_{UVW'}$ respectively. The subspaces $X$ and $Y$ intersect trivially. Since $U\subseteq V+W'$, we see that $X\subseteq Y+R$. Hence, by Lemma~\ref{lemma:direct_sum}, the number of choices for the subspace $R$ is $c(d_U-d_{UV},d_V-d_{UV},m-d_{UV},d_{W'}-d_{UVW'},d_{UW'}-d_{UVW'},d_{VW'}-d_{UVW'})$. Suppose now that $R$ is fixed. There are $q^{(d_{W'}-d_{UVW'})d_{UVW'}}$ subspaces $W'$ with $(W'+(U\cap V))/(U\cap V)=R$ and $(U\cap V)\cap W'=T$. Since all of these subspace have the property that $U\subseteq W'+V$, the formula~\eqref{eqn:w_dash_count} follows, and so the theorem is proved. 
\end{proof}

\begin{theorem}
\label{f_2_thm}
Let $f_2$ be as defined in Lemma~\ref{f_2_lemma}. That is, for a fixed matrix $X$ with $\rank(X)=r_X$, $f_2(r,r_X,r_B)$ gives the number of matrices $B\in\bF_q^{n\times m,r_B}$ such that $\rank(X+B)=r$. Then,
\[
f_2(r,r_X,r_B)=
\sum_{h=0}^{\min\{ r, r_X\}} q^{(r-h)(r_X-h)}  \qbinom{m-r_X}{r-h} 
\qbinom{r_X}{h}
f_1(r_X,r,h;r_B).
\]
\end{theorem}

\begin{proof}
Using the definition of $f_1$ given above Theorem~\ref{thm:row(M+B)=Ucount}, we see that
\begin{align}
&f_2(r,r_X,r_B)
\notag
\\
&=
\sum_{V \subseteq \bF_q^m : \dim(V)=r} f_1(r_X,r,\dim(V \cap \row(X)) ; r_B)
\label{eq:f_2_proof_1}
\\
&=
\sum_{h=0}^{\min\{ r, r_X \}}
| \{ V \subseteq \bF_q^m : \dim(V)=r, \dim(V \cap \row(X))=h \} | 
  f_1(r_X,r,h ; r_B)
\label{eq:f_2_proof_2}
\end{align}
where \eqref{eq:f_2_proof_1} follows since the number of matrices $B$ with $\rank (X+B)=r$ is equal to the number of matrices $B$ with $\row(X+B)=V$, summed over all spaces $V \subseteq \bF_q^m$ with $\dim(V)=r$. 

By Corollary~\ref{cor:count_spaces_fixed_dim_int}, the number of $r$-dimensional subspaces $V \subseteq \bF_q^m$, with $\dim(V \cap \row(X))=h$ is
\begin{equation} \label{eq:f_2_proof_3}
q^{(r-h)(r_X-h)}  \qbinom{ m-r_X}{ r-h} 
\qbinom{ r_X }{ h}.
\end{equation}
Substituting \eqref{eq:f_2_proof_3} into \eqref{eq:f_2_proof_2} gives the result. 
\end{proof}

\section{Conclusion}
\label{sec:conclusion}
 
In this paper we have considered a class of matrix channels (Gamma channels) suitable for modelling random linear network coding when random errors are introduced during transmission. The Gamma channels are a generalisation of the AMMC channel considered in~\cite{silva2010}. Random errors are modelled by a matrix whose rank represents the number of linearly independent errors. The error matrix is chosen by first picking its rank according to a rank distribution $\mathcal{R}$ dependent on the application, and then choosing uniformly from all matrices of this rank (a UGR distribution). We show that in this model there always exists a capacity achieving input distribution that is UGR. This key result allows us to compute the capacity of the channel as a maximisation problem over possible (input) rank distributions, a set of linear rather than exponential size. We presented sample capacity computations in the introduction: all computations used a simple hill-climbing algorithm to perform the maximisation, and were implemented in Mathematica~10.4~\cite{Wolfram}.

\begin{problem}
Can bounds for the AMMC capacity be improved, to give good asymptotic results in more situations?
\end{problem}
We ran simulations to show that for the AMMC channel with two errors, the true capacity of the channel closely follows the trend of the previously known upper bound for the capacity. It might be possible to improve the lower bound on the capacity by using simulation results as a guide.

\begin{problem}
Can good asymptotic bounds on the capacity of the Gamma channel be established?
\end{problem}
We believe it will be hard to find good capacity bounds that hold in complete generality. But it would be very interesting to investigate the binomial rank distribution for errors, or the distribution arising for errors that are not linearly independent mentioned in the introduction.

\begin{problem}
Can explicit good coding schemes for the Gamma channel be constructed?
\end{problem}
Theorem~\ref{capacityUGR} shows that there are UGR input distributions that achieve capacity. It would be interesting to see explicit good coding schemes that use UGR input distributions. (We are not aware of such schemes, even in special cases such as the AMMC channel.)  


\section*{Acknowledgements} Jessica Claridge would like to acknowledge the support of an EPSRC PhD studentship. Both authors would like to acknowledge the support of the EU COST Action IC1104, and would like thanks reviewers of an earlier version of this paper for their comments.



\begin{thebibliography}{10}

\bibitem{ahlswede2000}
R.~Ahlswede, N.~Cai, S.-Y.~R. Li, and R.~W. Yeung.
\newblock Network information flow.
\newblock {\em IEEE Transactions on Information Theory}, 46(4):1204--1216, Jul
  2000.

\bibitem{Bender1975}
E.~A. Bender and J.~R. Goldman.
\newblock On the applications of {M}{\"o}bius inversion in combinatorial
  analysis.
\newblock {\em The American Mathematical Monthly}, 82(8):789--803, October
  1975.

\bibitem{boyd2004convex}
S.~Boyd and L.~Vandenberghe.
\newblock {\em Convex Optimization}.
\newblock Cambridge university press, 2004.

\bibitem{Cameron1994}
P.~J. Cameron.
\newblock {\em Combinatorics: {T}opics, Techniques, Algorithms}.
\newblock Cambridge University Press, 1994.

\bibitem{Claridge_thesis}
J.~Claridge.
\newblock {\em On Matrix Models for Network Coding}.
\newblock PhD Thesis, Royal Holloway, University of London, 2017.

\bibitem{cover2012elements}
T.~M. Cover and J.~A. Thomas.
\newblock {\em Elements of information theory}.
\newblock John Wiley \& Sons, 2012.

\bibitem{Gabidulin85}
{\`{E}}.~M. Gabidulin.
\newblock Theory of codes with maximum rank distance.
\newblock {\em Problems of Information Transmission}, 21(1):1--12, January
  1985.

\bibitem{gadouleau2008packing}
M.~Gadouleau and Z.~Yan.
\newblock Packing and covering properties of rank metric codes.
\newblock {\em IEEE Transactions on Information Theory}, 54(9):3873--3883,
  2008.

\bibitem{Gadouleau2009}
M.~Gadouleau and Z.~Yan.
\newblock Bounds on covering codes with the rank metric.
\newblock {\em IEEE Communications Letters}, 13(9):691--693, 2009.

\bibitem{ho2006RLNC}
T.~Ho, M.~M\'{e}dard, R.~K\"{o}tter, D.~R. Karger, M.~Effros, J.~Shi, and
  B.~Leong.
\newblock A random linear network coding approach to multicast.
\newblock {\em IEEE Transactions on Information Theory}, 52(10):4413--4430,
  October 2006.

\bibitem{Wolfram}
Wolfram~Research{,} Inc.
\newblock Mathematica, {V}ersion 10.4.
\newblock Champaign, IL, 2016.

\bibitem{Koetter2008}
R.~K\"{o}tter and F.~R. Kschischang.
\newblock Coding for errors and erasures in random network coding.
\newblock {\em IEEE Transactions on Information Theory}, 54(8):3579--3591, Aug
  2008.

\bibitem{li2003linear}
S.-Y.~R. Li, R.~W. Yeung, and N.~Cai.
\newblock Linear network coding.
\newblock {\em IEEE Transactions on Information Theory}, 49(2):371--381, 2003.

\bibitem{montanari2013}
A.~Montanari and R.~L. Urbanke.
\newblock Iterative coding for network coding.
\newblock {\em IEEE Transactions on Information Theory}, 59(3):1563--1572,
  2013.

\bibitem{Nobrega2013}
R.~W. Nobrega, D.~Silva, and B.~F. Uchoa-Filho.
\newblock On the capacity of multiplicative finite-field matrix channels.
\newblock {\em IEEE Transactions on Information Theory}, 59(8):4949--4960,
  2013.

\bibitem{siavoshani2011}
M.~J. Siavoshani, S.~Mohajer, C.~Fragouli, and S.~N. Diggavi.
\newblock On the capacity of noncoherent network coding.
\newblock {\em IEEE Transactions on Information Theory}, 57(2):1046--1066,
  2011.

\bibitem{silva2008rank}
D.~Silva, F.~R. Kschischang, and R.~K\"{o}tter.
\newblock A rank-metric approach to error control in random network coding.
\newblock {\em IEEE Transactions on Information Theory}, 54(9):3951--3967,
  2008.

\bibitem{silva2010}
D.~Silva, F.~R. Kschischang, and R.~K\"{o}tter.
\newblock Communication over finite-field matrix channels.
\newblock {\em IEEE Transactions on Information Theory}, 56(3):1296--1305,
  March 2010.

\end{thebibliography}
\end{document}